\theoremstyle{plain}
\newtheorem{theorem}{Theorem}
\newtheorem{lemma}[theorem]{Lemma}
\newtheorem{proposition}[theorem]{Proposition}
\theoremstyle{definition}
\newtheorem{conjecture}[theorem]{Conjecture}
\newtheorem*{remark}{Remark}
\algrenewcommand{\Require}{\item[\textbf{Input:}]}
\algrenewcommand{\Ensure}{\item[\textbf{Output:}]}
\newcommand{\ang}[1]{\langle#1\rangle}
\newcommand{\abs}[1]{\left\vert#1\right\vert}
\newcommand{\tildO}{\tilde{O}}
\newcommand{\Romnum}[1]{\uppercase\expandafter{\romannumeral #1}}
\DeclareMathOperator{\ringofend}{End} % endomorphism ring
\DeclareMathOperator{\loglog}{loglog}
\DeclareMathOperator{\Hom}{Hom}
\def\Z{\ensuremath{\mathbb{Z}}}
\def\F{\ensuremath{\mathbb{F}}}
\def\P{\ensuremath{\mathbb{P}}}
\def\MM{\ensuremath{\mathsf{M}}}
\title{On Division Polynomial PIT and Supersingularity}
\author[1]{Javad Doliskani}
\affil[1]{\small Institute for Quantum Computing, University of Waterloo}
\date{}
\begin{document}

\maketitle

\begin{abstract}
	\noindent For an elliptic curve $E$ over a finite field $\F_q$, where $q$ is a prime power, we 
	propose new algorithms for testing the supersingularity of $E$. Our algorithms are based on the 
	Polynomial Identity Testing (PIT) problem for the $p$-th division polynomial of $E$. In 
	particular, an efficient algorithm using points of high order on $E$ is given.
	
	\vspace*{7mm}
	\noindent\textbf{Key words:} Division polynomials; Polynomial Identity Testing; Elliptic curves 
	\\
	\noindent\textbf{MSC 2010:} Primary 11Y16, 14H52, Secondary 12Y05
\end{abstract}

%/////////////////////////////////////////////////////

\section{Introduction}
\label{sec:intro}

Recent cryptographic initiatives based on supersingular elliptic curves have attracted considerable 
attention \cite{Jao2011, Jao2014, Charles2009}. In particular, the underlying security assumption 
in such constructions is the hardness of computing isogenies between two curves. Various attempts 
on solving the \textit{Supersingular Isogeny Problem} has led to a more extensive study of 
supersingular elliptic curves. One of the interesting algorithmic questions about an elliptic curve 
is to efficiently decide whether it is ordinary or supersingular. In this paper, we propose 
efficient solutions for this question.

An elliptic curve $E / \F_q$ is supersingular if its Hasse invariant is zero, and ordinary 
otherwise. The Hasse invariant is said to be zero if the map $\pi^*: H^1(E, \mathscr{O}_E) 
\rightarrow H^1(E, \mathscr{O}_E)$ induced by the Frobenius $\pi: E \rightarrow E$ is zero. Other 
equivalent definitions of supersingularity can be derived from this one. For example, given the 
short Weierstrass equation $y^2 = x^3 + ax + b$ of $E$, the Hasse invariant is zero if and only if 
the coefficient of $x^{p - 1}$ in $(x^3 + ax + b)^{(p - 1) / 2}$ is zero. One of the more usual 
definitions of supersingularity is based on the $p$-torsion subgroup $E[p]$ of $E$. If $E[p] = 0$ 
(equivalently $E[p^n] = 0$) then $E$ is supersingular. See \cite{Husemoeller1987} for an 
introduction on elliptic curves and other definitions for supersingulariy. 

It can be shown that the $j$-invariants of all supersingular elliptic curves $E / \overline{\F}_p$ 
reside in the quadratic extension $\F_{p^2}$. Since isomorphism classes of elliptic curves are 
uniquely determined by their $j$-invariant, it follows that every supersigular elliptic curve can 
be defined over $\F_{p^2}$. So for the rest of this paper we assume $\F_q = \F_{p^2}$. Note that 
an elliptic curve $E$ with $j(E) = 0$ (resp. $j(E) = 1728$) is supersingular if and only if $p = 2 
\bmod 3$ (resp. $p = 3 \bmod 4$) \cite[Theorem 4.1.c]{silverman2009arithmetic}. Since there is only 
one supersingular curve for $p = 2$ and $p = 3$, which has $j$-invariant $j = 0$, we assume that $p 
> 3$. 

Denote by $\MM(n)$ the cost of multiplying two polynomials of degree $n$ over $\F_q$ where the unit 
cost is operations in $\F_q$. We can take $\MM(n) = O(n\log n \loglog n)$ \cite{vzGG}. We also 
assume that two $n$-bit integers can be multiplied using $O(n\log n \loglog n)$ bit operations 
\cite{vzGG}. Since $\F_q / \F_p$ is only a quadratic extension, multiplication in $\F_q$ is 
performed using a few multiplications in $\F_p$. Therefore, polynomial multiplication over $\F_q$ 
has the same asymptotic cost as that over $\F_p$, and we shall always count the number of 
operations in $\F_p$.

Given $E / \F_q$, an obvious algorithm to determine whether $E$ is ordinary or supersingular is to 
do the exponentiation above and check the coefficient of $x^{p - 1}$. The exponentiations algorithm 
then takes $O(\MM(p))$ operations in $\F_q$. A slightly better complexity $O(p)$ can be obtained by 
using the Deuring polynomial $H_p(x)$ \cite[Theorem 4.1.b]{silverman2009arithmetic}. These 
algorithms are exponential in $\log p$, and we wish to have algorithms with complexity at most 
polynomial in $\log p$.

The Frobenius endomorphism $\pi$ of $E$ satisfies $\pi^2 - t\pi + p^2 = 0$ in the endomorphism ring 
$\ringofend(E)$. The number $t$ is called the \textit{trace} of the Frobenius. One can show that 
$E$ is supersingular if and only if $t \equiv 0 \pmod{p}$. An immediate algorithm that comes to 
mind is then to compute $t$ by counting the number of points on $E$, and test the above congruence. 
An efficient deterministic point counting algorithm is due to \cite{schoof85}, which takes 
$\tildO(\log^5 p)$ bit operations. The notation $\tildO$ means ignoring logarithmic factors in the 
main complexity parameter. A Las Vegas variant of Schoof's algorithm, called SEA, was proposed by 
Elkies and Atkin. Under some heuristic assumption, the SEA algorithm runs in an expected 
$\tildO(\log^4p)$ bit operations.  

If one is content with a high-probability output then a simple \textit{Monte Carlo} test can be 
performed as follows. The curve $E$ is supersingular if and only if $E(\F_q) \cong (\Z/(p \pm 
1)\Z)^2$. Therefore, we can pick a random point $P \in E(\F_q)$ and check whether $(p \pm 1)P = 0$. 
If so, then $E$ is supersingular with probability at least $(p - 1) / p$ \cite[Prop. 
2]{sutherland2012}, which is very close to 1 when $p$ is large. The cost of this test is 
$\tildO(\log p)$ operations in $\F_p$ or $\tildO(\log^2 p)$ bit operations.

The best known deterministic algorithm, up to the author's knowledge, is due to 
\cite{sutherland2012}. The algorithm is based on traversing the isogeny graph of $E$, and it runs 
in $O(\log^3p \loglog^2p)$ bit operations. In fact, the algorithm exploits an interesting 
structural difference between isogeny graphs of ordinary and supersingular curves. For more details 
on the algorithm and on isogeny graphs we refer the reader to \cite{sutherland2012, kohel1996}. In 
this paper, we propose an algorithm that can efficiently check the supersingularity of $E$ using 
points of high order on $E$. Our main result can be summarized as follows:
\begin{theorem}
	\label{theo:main}
	Given an elliptic curve $E/\F_q$, there exists a Las Vegas algorithm that can efficiently
	decide whether $E$ is supersingular or ordinary. On input an ordinary curve, the algorithm runs 
	in an expected $\tildO(\log p)$ operations in $\F_p$. On input a supersingular curve, under the 
	generalized Riemann hypothesis, the algorithm runs in an expected $\tildO(\log^2 p)$ operations 
	in $\F_p$.
\end{theorem}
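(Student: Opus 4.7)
The proof will be constructive: I will exhibit a Las Vegas algorithm matching the claimed running times. The strategy hinges on the fact that the $p$-th division polynomial $\psi_p$ of $E$, reduced modulo the Weierstrass equation, becomes the zero element of the coordinate ring of $E$ precisely when $E$ is supersingular; on an ordinary $E$, the reduced $\psi_p$ is nonzero and vanishes on only $O(p)$ of the $\asymp p^2$ points of $E(\F_q)$. Hence supersingularity testing reduces to a polynomial identity testing question for $\psi_p$, which we attack by evaluating $\psi_p$ at random rational points rather than computing the polynomial explicitly.

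The algorithm first samples a uniform $P \in E(\F_q)$ by choosing a random $x$ and extracting a square root of $x^3 + ax + b$, at cost $\tildO(\log p)$ using a probabilistic square-root procedure. It then evaluates $\psi_p(P)$ by unrolling the classical recurrences
\[
\psi_{2n+1} = \psi_{n+2}\psi_n^3 - \psi_{n-1}\psi_{n+1}^3, \qquad \psi_{2n} = (\psi_n/2y)(\psi_{n+2}\psi_{n-1}^2 - \psi_{n-2}\psi_{n+1}^2),
\]
along a standard addition chain for $p$. Each doubling or addition step propagates a constant-size window of values $(\psi_{n-1}, \psi_n, \psi_{n+1}, \psi_{n+2})$ at $P$ using $O(1)$ operations in $\F_q$, so the full evaluation costs $\tildO(\log p)$ operations in $\F_p$. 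A nonzero output certifies that $E$ is ordinary. Since for ordinary $E$ the zero set of $\psi_p$ on $E(\F_q)$ has size $O(p)$, a single random sample produces a nonzero value with probability $1 - O(1/p)$, giving the expected $\tildO(\log p)$ bound.

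If $\psi_p(P)=0$, the algorithm must distinguish a truly supersingular curve from an ordinary one where $P$ happened to hit an exceptional point. Here I would exploit the characterization that $|E(\F_q)|$ is one of the values $\{(p-1)^2,\,p^2-p+1,\,p^2+1,\,p^2+p+1,\,(p+1)^2\}$ precisely when $E$ is supersingular, and use points of high order to pin $|E(\F_q)|$ to one of these candidates: by the Hasse bound, a point of order $n > 4\sqrt{q}$ dividing a proposed $|E(\F_q)|$ together with a cofactor check forces equality. Under GRH, effective bounds on the least prime witnesses (in the style of Bach) allow us to enumerate divisors of the candidate orders and find points of the required order in $\tildO(\log p)$ trials, each costing $\tildO(\log p)$, for $\tildO(\log^2 p)$ overall. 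The main technical obstacle, and where I expect the proof to require the most care, is the second phase: establishing unconditionally that ordinary curves are rejected, and establishing under GRH that supersingular curves yield a certificate within the claimed time without looping.
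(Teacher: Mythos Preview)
Your first phase contains a factual slip that breaks the test as written. For a supersingular curve the paper shows $\psi_p = \pm 1$ as a polynomial (Lemma~\ref{lem:ss_divpoly}); it is a nonzero constant, not the zero element of the coordinate ring. Hence ``$\psi_p(P)\ne 0$'' is satisfied by \emph{every} supersingular curve and certifies nothing. The correct one-sided test is ``$\psi_p(P)\notin\{\pm 1\}$ $\Rightarrow$ ordinary'', and with that fix your first phase coincides with the paper's probabilistic algorithm in Section~\ref{sec:pit-for-div}.

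The real gap is in your second phase. For $j(E)\ne 0,1728$ a supersingular curve over $\F_q=\F_{p^2}$ has $E(\F_q)\cong(\Z/(p\pm 1)\Z)^2$, so the exponent of the group is $p\pm 1<4p=4\sqrt{q}$. There is no point of order exceeding $4\sqrt{q}$, and your proposed certificate (``a point of order $n>4\sqrt{q}$ dividing a candidate $|E(\F_q)|$'') can never be produced on precisely the curves you must certify. Repairing this would require exhibiting two independent points whose pairing witnesses a subgroup of size $>4\sqrt{q}$, together with a way to compute their exact orders; that in turn drags in the factorization of $p\pm 1$, which is not an $\tildO(\log^2 p)$ task, and your appeal to ``Bach-style bounds under GRH'' does not address this.

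The paper takes a different route in the certification phase. It keeps the problem as PIT for $\psi_p=\pm 1$ (together with the companion identity $\psi_{p-1}\psi_{p+1}(x)=x-x^{p^2}$) and evaluates these identities not at a point of $E(\F_q)$ but modulo an irreducible factor $g(x)$ of the cyclotomic polynomial $\Phi_r(x)$, where $r=O(\log p)$ is a prime for which $p$ is a primitive root. GRH enters only through Hooley's theorem on Artin's conjecture, guaranteeing that such an $r$ exists in the required range. Correctness comes from an adaptation of Voloch's results on points of high order: a point $P$ whose $x$-coordinate is a primitive $r$-th root of unity and which satisfies the identities necessarily has order $>2p+2$, which forces $\psi_p\equiv\pm 1$ identically. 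Thus the ``high-order point'' is manufactured over an explicit degree-$O(\log p)$ extension rather than sought inside $E(\F_q)$, and the certificate is the polynomial identity itself, not a group-order computation.
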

Note that, ignoring the logarithmic factors, our result does not asymptotically improve on the 
state-of-the-art algorithm \cite{sutherland2012}. However, as experiments in Section 
\ref{sec:exper} show, our algorithm attains better runtimes in practice. There is a conjecture of 
Poonen, discussed in Section \ref{sec:pit-h-order}, about orders of points in subvarieties of 
semiabelian varieties. If the conjecture holds, our algorithm always runs in an expected 
$\tildO(\log p)$ operations in $\F_p$. This is a significant improvement on the algorithm of 
\cite{sutherland2012} both in theory and in practice.

\paragraph{Polynomial identity testing (PIT)}
Given a field $K$, an arithmetic circuit with $n$ variables over $K$ is a directed acyclic graph 
with the leaves considered as input variables and the root as output. The operations on the input, 
which are implemented by the internal nodes (gates), consist of only addition and multiplication in 
$K$. Here, the edges act as wires. Therefore, a circuit implements a polynomial function in $K[x_1, 
\dots, x_n]$. The size of a circuit $C$ is defined as the number of gates in $C$. The polynomial 
identity testing can be formally stated as: 

\vspace*{2mm}

\textit{PIT Problem:} Let $f \in K[x_1, \dots, x_n]$ be a polynomial given by the arithmetic 
circuit $C$. Find a deterministic algorithm with complexity $poly(size(C))$ operations in $K$ that 
tests if $f$ is identically zero.

\vspace*{2mm}

The above is equivalent to testing $f_1 = f_2$ for two given polynomials $f_1, f_2 \in K[x_1, 
\dots, x_n]$. A very efficient probabilistic algorithm is derived from the following theorem 
\cite{schwartz1980,zippel1979}.
\begin{lemma}[Schwartz-Zippel]
	\label{lemma:Schwartz-Zippel}
	Let $f \in K[x_1, \dots, x_n]$ be of degree $d \ge 0$. Let $S \subseteq K$ be a finite subset, 
	and let $\mathbf{s} \in S^n$ be a point with coordinates chosen independently and uniformly at 
	random. Then $\Pr[f(\mathbf{s}) = 0] \le d / \abs{S}$.
\end{lemma}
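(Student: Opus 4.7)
The plan is to prove the Schwartz-Zippel lemma by induction on the number of variables $n$, reducing the multivariate situation to the classical univariate fact that a nonzero polynomial of degree $d$ over a field has at most $d$ roots.

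For the base case $n = 1$, the polynomial $f$ is a nonzero univariate polynomial of degree at most $d$, so its zero set in $K$ has size at most $d$, and hence
\[
\Pr[f(s) = 0] \;=\; \frac{|\{s \in S : f(s) = 0\}|}{|S|} \;\le\; \frac{d}{|S|}.
\]

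For the inductive step, assume the statement holds for polynomials in $n-1$ variables. Write $f$ as a polynomial in $x_n$ with coefficients in $K[x_1, \dots, x_{n-1}]$:
\[
f(x_1, \dots, x_n) \;=\; \sum_{i=0}^{k} x_n^{\,i}\, g_i(x_1, \dots, x_{n-1}),
\]
where $k \le d$ is the largest exponent of $x_n$ appearing in $f$ and $g_k \not\equiv 0$. Note that $\deg g_k \le d - k$. I would then split the analysis according to whether the leading coefficient $g_k$ vanishes at the random point. By the inductive hypothesis applied to $g_k$ with $n-1$ variables and degree at most $d-k$,
\[
\Pr\!\bigl[g_k(s_1, \dots, s_{n-1}) = 0\bigr] \;\le\; \frac{d - k}{|S|}.
\]
Conditioned on the complementary event $g_k(s_1, \dots, s_{n-1}) \ne 0$, the specialization $f(s_1, \dots, s_{n-1}, x_n)$ is a nonzero univariate polynomial in $x_n$ of degree exactly $k$, so by the base case its probability of vanishing when evaluated at a uniformly random $s_n \in S$ is at most $k / |S|$.

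Combining these two cases with a union bound yields
\[
\Pr[f(\mathbf{s}) = 0] \;\le\; \frac{d - k}{|S|} + \frac{k}{|S|} \;=\; \frac{d}{|S|},
\]
completing the induction. The argument is essentially bookkeeping; the only subtle point, and the one to be careful about, is extracting a genuinely nonzero coefficient $g_k$ in fewer variables so that the inductive hypothesis applies, and then using the independence of $s_n$ from $s_1, \dots, s_{n-1}$ to justify conditioning on the value of $g_k$ before sampling $s_n$.
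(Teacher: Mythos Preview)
Your proof is correct and is the standard induction-on-$n$ argument for the Schwartz--Zippel lemma. The paper itself does not prove this statement at all: it merely states the lemma with citations to \cite{schwartz1980,zippel1979} and moves on, so there is no ``paper's own proof'' to compare against. Your write-up would serve perfectly well as the omitted proof; the only cosmetic remark is that what you call a ``union bound'' is really the decomposition $\Pr[f(\mathbf{s})=0] \le \Pr[g_k=0] + \Pr[f(\mathbf{s})=0 \mid g_k \ne 0]$, but the arithmetic is right.
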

If the degree $d$ is small compared to $\abs{S}$ then, by the theorem, evaluating $f$ at a random 
point tells if $f = 0$ with high probability. We will use this theorem in Section 
\ref{sec:pit-for-div}. For a survey on polynomial identity testing see \cite{saxena2009}.

%/////////////////////////////////////////////////////

\section{Division polynomials}
\label{sec:divpoly}

The $n$-th division polynomial $\psi_n$ of $E$ is an element of the function field $K(E)$ of $E$ 
with divisor $(\psi_n) = [n]^*\infty - n^2\infty$. The map $[n]^*$ is induced by the 
multiplication-by-$n$ endomorphism on the divisor class group $\text{Div}(E)$. Division polynomials 
can be defined using recursive relations as follows. Given $E: y^2 = x^3 + ax + b$ we have
\begin{equation}
\label{equ:divpoly1}
	\begin{array}{rll}
		\psi_0 & = & 0 \\
		\psi_1 & = & 1 \\
		\psi_2 & = & 2y \\
		\psi_3 & = & 3x^4 + 6ax^2 + 12bx - a^2 \\
		\psi_4 & = & 4y(x^6 + 5ax^4 + 20bx^3 - 5a^2x^2 - 4abx - 8b^2 - a^3) \\
		\psi_{2m + 1} & = & \psi_{m + 2}\psi_m^3 - \psi_{m - 1}\psi_{m + 1}^3 \quad \text{for } m 
		\ge 2 \\
		\psi_{2m} & = & (2y)^{-1}(\psi_{m + 2}\psi_{m - 1}^2 - \psi_{m - 2}\psi_{m + 1}^2)\psi_m 
		\quad \text{for } m \ge 3.
	\end{array}
\end{equation}
We also define the following polynomials which will be used in the subsequent sections.
\begin{equation}
\label{equ:divpoly-extra}
	\begin{array}{rll}
		\phi_m & = & x\psi_m^2 - \psi_{m + 1}\psi_{m - 1} \\
		\omega_m & = & (4y)^{-1}(\psi_{m + 2}\psi_{m - 1}^2 - \psi_{m - 2}\psi_{m + 1}^2)
	\end{array}
\end{equation}
It follows from the definition that a point $P \in E(\overline{\F}_p)$ is an $n$-torsion if and 
only if $\psi_n(P) = 0$. In other words, the division polynomial $\psi_n$ exactly encodes the whole 
$n$-torsion $E[n]$. Division polynomials play an important role in the theory of elliptic 
curves. They are also heavily used in implementations of the point counting algorithm of 
\cite{schoof85}.

\subsection{Computing division polynomials}

In this subsection, we briefly review an algorithm that efficiently computes the division 
polynomials. To compute $\psi_n$, the idea is to simply use the recursive relations 
\eqref{equ:divpoly1} to achieve a double-and-add scheme on the subscript $n$. First, we should note 
that it is possible to characterize non-$2$-torsion points with univariate versions of the 
$\psi_n$. This makes computations much easier. Define
\[
f_m = 
\begin{cases}
	\psi_m & m \text{ odd} \\
	\psi_m / \psi_2 & m \text{ even}.
\end{cases}
\]
Then for non-$2$-torsion $P \in E$ we have $P \in E[n]$ if and only if $f_n(P) = 0$. Let $F = 
\psi_2^2 = 4(x^3 + ax + b)$. Then the following relations for the $f_m$ can be derived from 
\eqref{equ:divpoly1}.
\begin{equation}
\label{equ:divpoly2}
	\begin{array}{rll}
		f_0 & = & 0 \\
		f_1 & = & 1 \\
		f_2 & = & 1 \\
		f_3 & = & \psi_3 \\
		f_4 & = & \psi_4 / \psi_2 \\
		f_{2m + 1} & = & 
		\begin{cases}
			f_{m + 2}f_m^3 - F^2f_{m - 1}f_{m + 1}^3 & m \text{ odd, } m \ge 3 \\
			F^2f_{m + 2}f_m^3 - f_{m - 1}f_{m + 1}^3 & m \text{ even, } m \ge 2 \\
		\end{cases} \\
		f_{2m} & = & (f_{m + 2}f_{m - 1}^2 - f_{m - 2}f_{m + 1}^2)f_m \quad \text{for } m \ge 3.
	\end{array}
\end{equation}
From the indices involved in the above relations it is immediate that given $f_{i - 3}, \dots, f_{i 
+ 5}$, one can compute the polynomials $f_{2i - 3}, \dots, f_{2i + 5}$, or the polynomials $f_{2(i 
+ 1) - 3}, \dots, f_{2(i + 1) + 5}$. We can save some multiplications by introducing $S_i = f_{i - 
1}f_{i + 1}$, $T_i = f_i^2$ and rewriting \eqref{equ:divpoly2} as
\begin{equation}
\label{equ:divpoly3}
\begin{array}{rll}
	f_{2m + 1} & = & 
	\begin{cases}
		T_mS_{m + 1} - F^2T_{m + 1}S_m & m \text{ odd, } m \ge 3 \\
		F^2T_mS_{m + 1} - T_{m + 1}S_m & m \text{ even, } m \ge 2 \\
	\end{cases} \\
	f_{2m} & = & T_{m - 1}S_{m + 1} - T_{m + 1}S_{m - 1} \quad \text{for } m \ge 3.
\end{array}
\end{equation}
We shall only need modular computation of division polynomials in this paper. Therefore, Algorithm 
\ref{alg:divpoly-comp} performs computations mod $f$ for a given polynomial $f \in \F_q[x]$.

\begin{algorithm}[H]
	\caption{Division polynomial computation}
	\label{alg:divpoly-comp}
	\begin{algorithmic}[1]
		\Require Integer $m \ge 1$, and polynomial $f(x) \in \F_q[x]$ of degree $n$
		\Ensure The division polynomials $f_{m - 3}, \dots, f_{m + 5} \pmod{f}$
		\State Let $b_kb_{k - 1} \cdots b_0$ be the binary representation of $m$
		\State Set $r, s$ as follows \\
		$r = 3, s = 2$~ if~ $b_kb_{k - 1} = 11$ \\
		$r = 4, s = 3$~ if~ $b_kb_{k - 1}b_{k - 2} = 100$ \\
		$r = 5, s = 3$~ if~ $b_kb_{k - 1}b_{k - 2} = 101$
		\State Set $j = r$, and compute $f_{j - 3}, \dots, f_{j + 5} \pmod{f}$
		\For {$i = k - s$ down to $0$}\label{step:divpoly-loop} 
			\State Compute $f_{2j + b_i - 3}, \dots, f_{2j + b_i + 5} \pmod{f}$ from $f_{j - 3}, 
			\dots, f_{j + 5}$ using relations \eqref{equ:divpoly3}
			\State $j \leftarrow 2j + b_i$
		\EndFor
		\State \Return $f_{m - 3}, \dots, f_{m + 5}$
	\end{algorithmic}
\end{algorithm}

The correctness of the algorithm follows from the previous remarks. The runtime is dominated by 
the for-loop at Step \ref{step:divpoly-loop}. The number of iterations is $O(\log m)$, and at each 
iteration a few polynomial multiplications of degree $n$ is done at the cost of $O(\MM(n))$ 
operations in $\F_p$. Therefore, the total runtime is $O(\MM(n)\log m)$ operations in $\F_p$.

%/////////////////////////////////////////////////////

\section{PIT for the $p$-th division polynomial}
\label{sec:pit-for-div}

Given an elliptic curve $E/\F_q$, we have $E[p] \cong 0$ (resp. $E[p] \cong \Z / p\Z$) when $E$ is 
supersingular (resp. ordinary). Therefore, the $p$-th division polynomial $\psi_p$ of $E$ is a 
constant $c \in \F_q$ when $E$ is supersingular, and non-constant otherwise. This means solving the 
polynomial identity testing problem $\psi_p = c$ will give an answer to the question of whether $E$ 
is ordinary or supersingular. 

First, we investigate the shape of $\psi_p$ in both cases. For the general polynomial $\psi_n$ one 
has
\[
\psi_n =
\begin{cases}
	y(nx^{(n^2 - 4) / 2} + \cdots) & n \text{ even} \\
	nx^{(n^2 - 1) / 2} + \cdots & n \text{ odd},
\end{cases}
\]
see \cite{washington2008} for more details. It follows that for $\psi_p$, which is a univariate 
polynomial, $\deg \psi_p < (p^2 - 1) / 2$. In fact, for ordinary curves we have
\begin{lemma}
	\label{lem:ord_divpoly}
	Let $E/\F_q$ be an ordinary elliptic curve and let $r$ be the order of trace $t$ of the 
	Frobenius in $(\Z / p\Z)^*$. Then $\deg \psi_p = p(p - 1) / 2$. Also
	\[ \psi_p = f_1^pf_2^p \cdots f_{(p - 1) / 2r}^p \]
	is the factorization of $\psi_p$ over $\F_q$ where all $f_i$ are of the same degree $r$.
\end{lemma}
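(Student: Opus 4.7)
The plan is to treat the degree and the factorization separately, with the shared input being the divisor identity $(\psi_p) = [p]^*\infty - p^2\infty$ on $E$ and the Frobenius action on $E[p]$.

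For the degree, I would exploit that $[p]\colon E\to E$ has total degree $p^2$ but, because $E$ is ordinary, its kernel $E[p]$ has only $p$ geometric points. In characteristic $p$ the multiplication-by-$p$ map decomposes as the composition of a purely inseparable isogeny of degree $p$ (a Frobenius) with a separable isogeny of degree $p$ (the Verschiebung), so the inseparable degree of $[p]$ is exactly $p$. Therefore every $Q\in E[p]$ appears with multiplicity $p$ in $[p]^*\infty$, giving $[p]^*\infty = p\sum_{Q\in E[p]}(Q)$. Since $p$ is odd, $\psi_p$ lies in $\F_q[x]$, and its roots over $\overline{\F}_p$ are precisely the $x$-coordinates of the non-trivial $p$-torsion points, each with multiplicity $p$. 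There are $(p-1)/2$ such $x$-coordinates after pairing $P$ with $-P$, which gives $\deg\psi_p = p(p-1)/2$.

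For the factorization over $\F_q$, I would analyze the action of the generator $\pi$ of $\gal(\overline{\F}_q/\F_q)$ on these roots. Reducing the characteristic equation $\pi^2 - t\pi + q = 0$ modulo $p$ gives $\pi(\pi - t) \equiv 0$ on $E[p]$; since $E$ is ordinary, $t\not\equiv 0\pmod p$ and $\pi$ acts on $E[p]\cong\Z/p\Z$ as multiplication by the nonzero root $t \bmod p$. Hence $\pi^j(P) = [t^j]P$, so the orbit of a non-trivial $p$-torsion point under $\pi$ has size $r = \order_p(t)$. Passing to $x$-coordinates amalgamates the orbits of $P$ and $-P$; in the generic case that $-1\notin\langle t\rangle\subset(\Z/p\Z)^\times$, these are two disjoint $\pi$-orbits of $p$-torsion points that collapse to a single $\pi$-orbit of $x$-coordinates of size $r$, yielding exactly $(p-1)/(2r)$ such orbits. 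Each orbit corresponds to an irreducible factor $f_i\in\F_q[x]$ of degree $r$, and combining with the multiplicity $p$ from the first paragraph gives $\psi_p = f_1^p\cdots f_{(p-1)/(2r)}^p$.

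The part I expect to require the most care is the first step: justifying the multiplicity $p$ in $[p]^*\infty$. This is not a formal consequence of $\deg[p]=p^2$ and $|E[p]|=p$; one really needs the decomposition of $[p]$ into Frobenius and Verschiebung (or equivalently an explicit local computation at each $Q\in E[p]$) to see that the inseparable degree coincides with the ramification index at every preimage of $\infty$. Once this is in hand, both the degree count and the translation of Galois orbits into the factorization are straightforward.
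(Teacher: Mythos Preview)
Your argument is correct and, for the factorization part, coincides with the paper's: both reduce the characteristic equation modulo $p$ to see that $\pi$ acts on $E[p]\cong\Z/p\Z$ as multiplication by $t$, so Galois orbits of nonzero $p$-torsion abscissas have length $r$.

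For the degree, you and the paper use the same underlying fact---the decomposition of $[p]$ through Frobenius---but execute it differently. You argue abstractly via divisors: the inseparable degree of $[p]$ is $p$, so each $Q\in E[p]$ occurs with multiplicity $p$ in $[p]^*\infty$, and then the divisor identity $(\psi_p)=[p]^*\infty-p^2\infty$ gives the degree and the $p$-th power structure. The paper instead writes $[p]=\pi\circ\hat\pi$ explicitly on coordinates: if $\hat\pi(x,y)=(F,G)$ then $\phi_p/\psi_p^2=F^p$, so $\psi_p$ is visibly a $p$-th power of a polynomial in $\F_q[x]$; combined with the fact that its roots are the $(p-1)/2$ nonzero $p$-torsion abscissas, the degree follows. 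The paper's route is shorter and sidesteps exactly the point you identify as delicate (uniform ramification index $p$), since the $p$-th power structure is read off from one rational-function identity rather than deduced from separability degrees. Your route is more conceptual and makes the divisor $(\psi_p)$ fully explicit, which is a nice bonus.

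Your caveat about the case $-1\in\langle t\rangle\subset(\Z/p\Z)^\times$ is well taken: when $t^{r/2}\equiv-1\pmod p$, the $\pi$-orbit of $P$ already contains $-P$, so the orbit on $x$-coordinates has size $r/2$ and the irreducible factors have degree $r/2$ rather than $r$. The paper's proof does not address this edge case either; it is a wrinkle in the statement of the lemma rather than in your proof.
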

\begin{proof}
	We have $\pi \circ \hat{\pi} = [p]$ where $\hat{\pi}$ is the dual\footnote{The dual of an 
	isogeny $\phi: E_1 \rightarrow E_2$ of degree $m$ is a unique isogeny $\hat{\phi}: E_2 
	\rightarrow E_1$ such that $\phi \circ \hat{\phi} = [m]$, see  
	\cite[III.6]{silverman2009arithmetic}.} of $\pi$. Let $\hat{\pi}(x, y) = (F, G)$ where $F, G$ 
	are rational functions in $x, y$. Then
	\begin{equation}
	\label{equ:mult-p}
		(F^p, G^p) = \pi(F, G) = [p](x, y) = \left( \frac{\phi_p}{\psi_p^2}, 		
		\frac{\omega_p}{\psi_p^3} \right).
	\end{equation}
	The last equality is the formula for multiplication by $p$, where $\phi_p$ and $\omega_p$ are
	defined using \eqref{equ:divpoly-extra}. It follows that $\psi_p$ is a $p$-th power. Since 
	$\psi_p$ has roots exactly the abscissas of the nonzero $p$-torsion points it must have degree 
	$p(p - 1) / 2$, which proves the first part.
	
	For the second part, note that the action of the Frobenius on $E[p]$ is just multiplication by 
	the trace $t$. By the first part $\psi_p = \tilde{\psi}_p^p$ where $\tilde{\psi}_p$ splits into 
	factors of degree $r$ over $\F_q$.
\end{proof}
For supersingular elliptic curves we have
\begin{lemma}
	\label{lem:ss_divpoly}
	Let $E/\F_q$ be a supersingular elliptic curve with $j(E) \ne 0, 1728$. Then $\psi_p = \pm 1$.
\end{lemma}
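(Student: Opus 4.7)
The plan is to use the fact (established just above) that $c := \psi_p \in \F_q^\times$ is a constant when $E$ is supersingular, and then pin down $c$ up to sign by comparing leading coefficients in a polynomial identity derived from the multiplication-by-$p$ map.

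First I would identify $[p]$ explicitly on $E$. Since $E$ is supersingular, both $[p]$ and the Frobenius endomorphism $\pi$ (the $p^2$-power Frobenius, as $E/\F_{p^2}$) are purely inseparable of degree $p^2$, so they must differ by an element of $\Aut(E)$. Since $j(E) \neq 0, 1728$, one has $\Aut(E) = \{\pm 1\}$, forcing $[p] = \pm \pi$ and hence $[p](x, y) = (x^{p^2}, \pm y^{p^2})$. Combined with $[p] = (\phi_p/\psi_p^2,\, \omega_p/\psi_p^3)$ and $\psi_p = c$, this yields $\phi_p(x) = c^2 x^{p^2}$.

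Next I would substitute into the identity $\phi_p = x\psi_p^2 - \psi_{p-1}\psi_{p+1}$ from \eqref{equ:divpoly-extra} to obtain $\psi_{p-1}\psi_{p+1} = -c^2(x^{p^2} - x)$. Since both $p-1$ and $p+1$ are even, I can write $\psi_{p\pm 1} = y\, g_{p\pm 1}(x)$ with $g_{p\pm 1} \in \F_q[x]$, and use $y^2 = f(x)$, where $f(x) = x^3 + ax + b$, to turn this into the polynomial identity
\[ f(x)\, g_{p-1}(x)\, g_{p+1}(x) \;=\; -c^2(x^{p^2} - x) \qquad \text{in } \F_q[x]. \]
To finish, I would compare leading coefficients. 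From the general expansion $\psi_n = y(n x^{(n^2 - 4)/2} + \cdots)$ for even $n$, the leading coefficient of $g_n$ equals $n$ modulo $p$, so the left-hand side has leading coefficient $1 \cdot (p-1)(p+1) \equiv -1 \pmod p$, while the right-hand side has leading coefficient $-c^2$. Equating gives $c^2 = 1$, i.e., $c = \pm 1$.

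The main subtlety is the step $[p] = \pm \pi$: this is precisely where the hypothesis $j(E) \neq 0, 1728$ is used essentially, since for the excluded $j$-invariants $\Aut(E)$ is strictly larger and the argument does not pin down $c^2$. Beyond this, the remaining work is a short, mechanical comparison of leading coefficients in a single polynomial identity over $\F_q[x]$.
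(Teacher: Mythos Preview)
Your proof is correct and follows the same core route as the paper: first establish $[p]=\pm\pi$, then read off $\psi_p^2=1$ from the $x$-coordinate of the multiplication-by-$p$ map. Two differences are worth noting. For the first step, you argue via purely inseparable degrees and $\Aut(E)=\{\pm1\}$, which makes the use of $j(E)\ne 0,1728$ explicit; the paper instead asserts $t=\pm 2p$ and factors the characteristic polynomial of $\pi$ as $(X\pm p)^2$. For the second step, once you have $\phi_p(x)=c^2 x^{p^2}$, the paper finishes in one line by quoting $\phi_p(x)=x^{p^2}+\cdots$, so $c^2=1$; your detour through $\phi_p=x\psi_p^2-\psi_{p-1}\psi_{p+1}$ and a leading-coefficient comparison for $f(x)g_{p-1}(x)g_{p+1}(x)$ is correct but unnecessary, since the leading coefficient of $\phi_p$ already gives the answer directly.
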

\begin{proof}
	When $E$ is supersingular we have $t = \pm 2p$ so that the characteristic polynomial of the 
	Frobenius factorizes as $(X \pm p)^2$. Therefore, $\pi(x, y) = \pm [p](x, y)$. Comparing the 
	first coordinates and considering the multiplication-by-$p$ formula \eqref{equ:mult-p} gives 
	$x^{p^2} = \phi_p(x) / \psi_p(x)^2$. Since $\phi_p(x) = x^{p^2} + \cdots$	
	(see~\cite{washington2008}), it follows that $\psi_p(x) = \pm 1$.
\end{proof}
It is not hard to distinguish the cases $\psi_p = 1$ and $\psi_p = -1$ in Lemma 
\ref{lem:ss_divpoly}, but we are not concerned with that. In fact, it is easily done in practice 
by computing $\psi_p(0)$. Without loss of generality, we only consider the case $\psi_p(x) = 1$. 
The remainder of this section is devoted to two algorithms derived from the above results. An 
efficient algorithm based on points of high order is discussed in Section \ref{sec:pit-h-order}.

\paragraph{A probabilistic algorithm.}
Lemmas \ref{lem:ord_divpoly}, \ref{lem:ss_divpoly} and Lemma \ref{lemma:Schwartz-Zippel} give a 
probabilistic algorithm for the PIT $\psi_p(x) = 1$: select a random element $a \in S = \F_q$ and 
compute $\psi_p(a)$. If $\psi_p(a) = 1$ then the algorithm outputs ``supersingular'', otherwise it 
outputs ``ordinary''. If $E$ is ordinary, then the output is ``supersingular'' with probability 
\[ P[\psi_p(a) = 1] = P[\tilde{\psi}_p(a) = 1] \le (p - 1) / 2p^2 < 1 / 2p, \] 
where $\psi_p = \tilde{\psi}_p^p$ as in Lemma \ref{lem:ord_divpoly}. The runtime of this algorithm 
is $\tildO(\log p)$ operations in $\F_p$, which is the same as the Monte Carlo algorithm given in 
Section \ref{sec:intro}.

\paragraph{A Schoof-like algorithm}
To see whether $E/\F_q$ is supersingular one can check either of the identities $\psi_p(x) = 1$ or 
$\pi = \pm [p]$. Therefore, solving the former PIT is equivalent to testing the latter identity. 
This can be done using arithmetic modulo division polynomials as we show in the following. See 
\cite[Section 2.2]{sutherland2012} for a similar algorithm.

Let $\mathcal{S} = \{2, 3, \dots, \ell\}$ be a set of primes such that $\prod_{r \in \mathcal{S}} r 
\ge 4p$. For this to be true we only need $\ell = O(\log p)$. We know that $E$ is supersingular if 
and only if $t = \pm 2p$. Therefore, it follows from the Chinese Remainder Theorem that $E$ is 
supersingular if and only if $t \equiv \pm 2p \pmod{r}$ for all $r \in \mathcal{S}$. Let $P \in E$ 
be a point of prime order $r$. Evaluating the characteristic polynomial of $\pi$ at $P$ gives
\[ \pi^2(P) - t\pi(P) + [p^2]P = 0. \]
If also $\pi(P) = \pm [p]P$ then $t \equiv \pm 2p \pmod{r}$. So, by the above, $E$ is supersingular 
if and only if $\pi(P) = \pm [p]P$ for a point $P$ of order $r$ for all $r \in \mathcal{S}$.

The last condition can be checked using division polynomials. More precisely, it is equivalent to 
checking the identities
\begin{equation}
\label{equ:schoof}
	x^{p^2} = x - \frac{\phi_s(x)}{\psi_s^2(x)} ~ \bmod \psi_r(x), \quad \text{for all } r \in 
	\mathcal{S},
\end{equation}
where $s = p \bmod r$. The polynomials $\{ \psi_i \}_{i \le \abs{\mathcal{S}}}$ can be computed in 
negligible time. Since the degree of $\phi_r(x)$ is $r^2 = O(\log^2 p)$, computing $x^{p^2} \bmod 
\psi_r$ takes $\tildO(\log^3 p)$ operations in $\F_p$. So the total cost of these checks is 
$\tildO(\log^4 p)$ operations in $\F_p$.

%/////////////////////////////////////////////////////

\section{PIT using points of high order}
\label{sec:pit-h-order}

From the identity $\psi_p(x) = 1$ for a supersingular curve $E/\F_q$ and the multiplication-by-$p$ 
formula \eqref{equ:mult-p} we get the stronger condition
\begin{equation}
\label{equ:ss-strong}
	\arraycolsep=1.4pt
	\begin{array}{l}
		\psi_p(x) = 1 \\
		\psi_{p - 1}\psi_{p + 1}(x) = x - x^{p^2}
	\end{array}
\end{equation}
for supersingularity. Note that in this case the univariate polynomial $\psi_{p - 1}\psi_{p + 
1}(x)$ splits completely over $\F_q$. If these identities are not satisfied when evaluated at any 
point $P \in E$ then $E$ is ordinary. Otherwise, not much can be said about $E$ unless the point 
$P$ is chosen more carefully. More precisely, we have
\begin{proposition}
	Let $P \in E$ be a point of order $r$, with $(r, p) = 1$, that satisfies \eqref{equ:ss-strong}. 
	Then $\psi_p(kP) = 1$ for all odd $1 \le k \le r - 1$. If moreover $\psi_p(2P) = 1$ then 
	$\psi_p(kP) = 1$ for all $1 \le k \le r - 1$.
\end{proposition}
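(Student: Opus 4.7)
The plan is to upgrade the pointwise hypotheses at $P$ to the operator identity $\pi=\epsilon[p]$ on $\langle P\rangle$ for some sign $\epsilon\in\{\pm1\}$, and then to read off $\psi_p(kP)$ from a standard multiplicativity identity for division polynomials.

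First, by the multiplication-by-$p$ formula \eqref{equ:mult-p} combined with $\phi_p = x\psi_p^2-\psi_{p-1}\psi_{p+1}$, the two identities in \eqref{equ:ss-strong} at $P$ together give
\[
x([p]P)=x(P)-\bigl(x(P)-x(P)^{p^2}\bigr)=x(P)^{p^2}=x(\pi(P)),
\]
so $[p]P=\epsilon\pi(P)$ for some sign $\epsilon$. Since $\ker(\pi-\epsilon[p])$ is a subgroup of $E$, this propagates to $[p](kP)=\epsilon\pi(kP)$ for every integer $k$.

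Next I would invoke the identity
\[
\psi_n([m]P)\,\psi_m(P)^{n^2}=\psi_{mn}(P),
\]
which is valid for any $m,n$ and any $P$ where both sides are defined. It is proved by checking that both sides are functions on $E$ with divisor $[mn]^*\infty-n^2[m]^*\infty$, and then that the normalizing constant equals $+1$ for all parities of $m,n$, using the leading expansion $\psi_k=(-1)^{k+1}k\,t^{-(k^2-1)}+\cdots$ in a uniformizer $t$ at $\infty$. Applying this identity with both $(n,m)=(p,k)$ and $(n,m)=(k,p)$, and using $\psi_p(P)=1$ together with $\psi_k(P)\neq 0$ (which holds because $1\le k\le r-1$ and $(r,p)=1$ force $kP$ to be neither trivial nor a $p$-torsion point), one gets
\[
\psi_p(kP)=\frac{\psi_k([p]P)}{\psi_k(P)^{p^2}}.
\]

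It then remains to compute $\psi_k([p]P)=\psi_k(\epsilon\pi(P))$. For $k$ odd, $\psi_k\in\F_q[x]$ depends only on $x$, so the sign $\epsilon$ drops out and Frobenius-stability of the coefficients gives $\psi_k(\epsilon\pi(P))=\psi_k(x(P)^{p^2})=\psi_k(P)^{p^2}$, whence $\psi_p(kP)=1$; this proves the first assertion. For $k$ even, $\psi_k=2y\cdot g_k(x)$ with $g_k\in\F_q[x]$, and the relation $y([p]P)=\epsilon y(P)^{p^2}$ contributes exactly one factor of $\epsilon$ (the factor $2^{p^2-1}$ from the leading "$2$" equals $1$ in $\F_p$), so $\psi_k([p]P)=\epsilon\,\psi_k(P)^{p^2}$ and hence $\psi_p(kP)=\epsilon$. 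In particular $\psi_p(2P)=\epsilon$, so the extra hypothesis $\psi_p(2P)=1$ pins $\epsilon=+1$ and extends the conclusion to every $k$. The main bookkeeping obstacle is verifying that the normalizing constant is exactly $+1$ in the multiplicativity identity across all parities, and tracking the $y$-coordinate sign $\epsilon$ in the even-$k$ case---this sign is precisely what produces the odd/even dichotomy in the statement.
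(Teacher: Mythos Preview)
Your proof is correct and follows essentially the same route as the paper: both hinge on the multiplicativity identity $\psi_{mn}=(\psi_m\circ[n])\,\psi_n^{m^2}$ applied with $(m,n)=(p,k)$ and $(k,p)$, together with the fact that $\psi_k$ depends only on $x$ for odd $k$, and then the even case is settled by pinning down the $y$-coordinate sign. Your explicit introduction of $\epsilon$ via $[p]P=\epsilon\,\pi(P)$ and the observation that $\ker(\pi-\epsilon[p])$ is a subgroup is a slightly cleaner packaging of the same ideas (the paper instead recovers $y_{pP}=y_{\pi P}$ from the $k=2$ instance of the identity), but the substance is identical.
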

\begin{proof}
	For all positive integers $m, n$ we have
	\begin{equation}
	\label{equ:divpoly-comp}
		\psi_{mn} = (\psi_m \circ [n])\psi_n^{m^2}.
	\end{equation}
	This follows from comparing divisors on both sides. When $k < r$ is odd, $\psi_k$ is a 
	univariate polynomial and so $\psi_k(P)$ only involves the $x$-coordinate of $P$. Since $x_{pP} 
	= \phi_p / \psi_p^2$ where $\phi_p$ is defined by \eqref{equ:divpoly-extra}, it follows that if 
	$P \in E$ satisfies \eqref{equ:ss-strong} then $x_{pP} = x_{\pi P}$ and hence 
	\begin{equation}
	\label{equ:frob-p}
		\psi_k(P)^{p^2} = \psi_k(\pi(P)) = \psi_k(pP).
	\end{equation}
	Therefore, for $(m, n) = (p, k)$ and $(m, n) = (k, p)$ identity \eqref{equ:divpoly-comp} gives
	\[
	\arraycolsep=1.4pt
	\begin{array}{rll}
		\psi_{kp}(P) & = & \psi_p(P)^{k^2} \psi_k(pP) \\
		& = & \psi_k(pP), \vspace*{2mm} \\		 
		\psi_{kp}(P) & = & \psi_k(P)^{p^2} \psi_p(kP) \\
		& = & \psi_k(pP) \psi_p(kP), \\
	\end{array}
	\]
	where $\psi_k(pP) \ne 0$ since $P$ is of order $r > k$. Comparing the above identities proves 
	the first part. For the second claim we show that \eqref{equ:frob-p} holds for any $k$, and the 
	rest of the proof is the same. Using \eqref{equ:divpoly-comp} with $(m, n) = (2, p)$ and $(m, 
	n) = (p, 2)$ we get $2y_{\pi P} = \psi_2(\pi(P)) = \psi_2(pP) = 2y_{pP}$ which implies $\pi(P) 
	= pP$, and hence $\psi_k(\pi(P)) = \psi_k(pP)$ for any $k$.
\end{proof}
The above result enables us to check the PIT \eqref{equ:ss-strong} using a high order point as 
follows. Let $P \in E$ be a point of order $r > 2p + 2$. If $P$ satisfies \eqref{equ:ss-strong} 
then $\tilde{\psi}_p(kP)^p = \psi_p(kP) = 1$ hence $\tilde{\psi}_p(kP) = 1$ for at least $p + 1$ 
values of $k$. This means $\tilde{\psi}_p(x_{kP}) = 1$ for at least $(p + 1) / 2$ distinct 
abscissas of the points $kP$. But the univariate polynomial $\tilde{\psi}_p$ has degree at most $(p 
- 1) / 2$ so it is uniquely determined by $(p + 1) / 2$ pairs $(a, \tilde{\psi}_p(a))$. It follows 
that $\tilde{\psi}_p^p = 1$, and hence $\psi_p = 1$. It only remains to efficiently find a point of 
high order on $E$. For this, we adapt the approach of Voloch \cite{voloch2007,voloch2010}.

Let $\mathbb{G}_m$ be the multiplicative group over $\F_q$. In \cite{voloch2010}, elements of high 
order in $\F_q^*$ are obtained using points on a curve contained in the fibered product $E \times 
\mathbb{G}_m$. We can use the same technique to obtain points of high order on $E$. The proofs 
remain essentially the same except for some parts which we explain in the following. We fix an 
embedding $\mathbb{G}_m \hookrightarrow \P^1$ and assume all curves are projective.

Let $X$ be an absolutely irreducible curve contained in $E \times \P^1$. Also assume that $X$ 
has non-constant projections to both factors and denote by $D$ the degree of $X \rightarrow \P^1$. 
Consider the pullback diagram

\begin{center}
	\begin{tikzpicture}[node distance = 1.3cm, 
		>={Computer Modern Rightarrow[length = 5, width = 4]}]
		\node (Xn) {$X_n$};
		\node[right = of Xn, inner sep = 5pt] (X) {$X$};
		\node[below = of Xn] (Gm1) {$\P^1$};
		\node[below = of X] (Gm2) {$\P^1$};
		\node[right = of X] (E) {$E$};
		\draw[->] (Xn) edge (X);
		\draw[->] (Xn) edge (Gm1);
		\draw[->] (X) edge (Gm2);
		\draw[->] (Gm1) edge node[above, midway] {\scriptsize $\mu_n$} (Gm2);
		\draw[->] (X) edge (E);
	\end{tikzpicture}
\end{center}

where $X_n$ is the fibered product $X \times \P^1$. The bottom morphism corresponds to the field 
extension $\F_q(u)/\F_q(t)$ with $u^n = t$. If $n$ is coprime to $Dp$ then the morphism $X_n 
\rightarrow X$ is separable of degree $n$, and $X_n$ is also absolutely irreducible. The morphism 
$X_n \rightarrow E$ obtained by composing the top two morphisms determines an element $y_n$ in the 
function field $K_n = K(X_n)$. Assume that all such $y_n$ are elements of some fixed algebraic 
closure of $K(X)$. Then Lemma 2.2 in \cite{voloch2010} becomes
\begin{lemma}
	\label{lem:linear-ind}
	The functions $y_n$, considered as morphisms $X_n \rightarrow E$ with $(n, Dp) = 1$, are 
	$\Z$-linearly independent.
\end{lemma}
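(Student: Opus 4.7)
The plan is to adapt the proof of Lemma 2.2 in \cite{voloch2010} from its multiplicative setting to the elliptic one, keeping the Kummer-theoretic backbone. After possibly enlarging $\F_q$ to contain sufficiently many roots of unity (which does not affect $\Z$-linear independence of the $y_n$, since base change is faithfully flat), I fix a compatible system of roots $u_N$ satisfying $u_N^N = t$ for $N$ running through positive integers coprime to $Dp$, so that every $K_n$ embeds into $K_N$ whenever $n \mid N$. Any finite set of $y_{n_i}$'s under consideration can then be compared inside $E(K_N)$ for $N = \lcm(n_1, \ldots, n_k)$.

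Suppose for contradiction that $\sum_{i=1}^{k} a_i y_{n_i} = O$ in $E(K_N)$ with not all $a_i$ zero. The Galois group $\gal(K_N / K(X)) \cong \mu_N$ acts on $E(K_N)$ by automorphisms of abelian groups, and its action on $y_{n_i}$ factors through the quotient $\mu_N \twoheadrightarrow \mu_{n_i}$, since $y_{n_i}$ is already defined over $K_{n_i}$. Applying a range of Galois elements to the relation and forming $\Z$-linear combinations of the resulting translated identities, I would iteratively sieve out the contributions of all but one index, reducing to an equation of the form $m \cdot y_{n_{i_0}} = O$ in $E(K_{n_{i_0}})$ with $m \neq 0$. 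This would contradict the fact that $y_{n_{i_0}}: X_{n_{i_0}} \to E$ is non-constant, which follows from the assumed non-constancy of $X \to E$ together with the surjectivity of $X_{n_{i_0}} \to X$.

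The main obstacle I anticipate is the Galois-sieving step. In Voloch's multiplicative original, $\sigma_\zeta$ acts on an $n$-th root by scalar multiplication by a root of unity, which commutes cleanly with the multiplicative group law and lets one read off putative exponents term by term. In the elliptic setting the analogous action is additive — essentially translation by a torsion point on $E$ — so separating the $y_{n_i}$ requires more care. A clean way forward, I expect, is to work $\ell$-adically in the Tate module of $E$ and recast the combinatorics as a linear algebra problem over $\Z_\ell$ for each prime $\ell$ dividing $N$, where the Kummer action becomes diagonalizable and the sieving mirrors Voloch's multiplicative argument one prime at a time.
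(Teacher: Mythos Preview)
The paper's proof and yours take genuinely different routes. The paper linearizes via the invariant differential: it pulls back $\omega_E$ along each $y_{n_i}$ to obtain $\omega_{n_i} = y_{n_i}^*\omega_E \in \Omega_{C_L}$, where $C_L$ is the curve with function field $L = \text{compositum of the } K_{n_i}$. Because the map $E(L) \to \Omega_L$, $P \mapsto P^*\omega_E$, is additive, a $\Z$-relation among the $y_{n_i}$ becomes an $\F_q$-linear relation among the $\omega_{n_i}$; the latter is ruled out by observing that each $\omega_{n_i}$ lives in the sub-module $\Omega_{K_{n_i}} \hookrightarrow \Omega_L$ (the inclusion being injective since $C_L \to X_{n_i}$ is separable), and these sub-modules are distinct. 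No Galois sieving is needed.

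Your Galois-theoretic plan has a real gap at exactly the point you flag. You are right that $\gal(K_N/K(X)) \cong \mu_N$ acts on $E(K_N)$ by group automorphisms and that the action on $y_{n_i}$ factors through $\mu_{n_i}$. But your description of this action as ``essentially translation by a torsion point on $E$'' is not what occurs in the paper's construction: the Kummer tower $K_N/K(X)$ is obtained by extracting an $N$-th root on the $\mathbb{G}_m$ (equivalently $\P^1$) factor, \emph{not} by taking $[n]$-preimages on $E$. Hence there is no reason for $\sigma \cdot y_{n_i} - y_{n_i}$ to lie in $E[n_i]$, and no reason for the action to diagonalize in the Tate module as you suggest. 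In Voloch's multiplicative original the scalar action $\sigma_\zeta(y_n) = \zeta^{N/n} y_n$ works precisely because the root is extracted in the \emph{same} $\mathbb{G}_m$ that $y_n$ lives in; that coincidence is absent here, so the $\ell$-adic sieving you sketch does not get off the ground. The paper's pullback-of-$\omega_E$ device is exactly the substitute for this missing structure: it supplies an honest $\Z$-linear map from $E(L)$ into a vector space where the different Kummer levels $K_{n_i}$ can be separated directly.
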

\begin{proof}
	Given $\{y_{n_i}\}_{1 \le i \le s}$, let $L$ be the compositum of the function fields $K_{n_i}$,
	and let $C_L$ be the smooth curve with function field $L$. We have an isomorphism 
	\begin{equation}
	\label{equ:curve-hom}
		E(L) \cong \Hom_{\F_q}(C_L, E),
	\end{equation}
	where the right hand side is the group of morphisms of $k$-schemes. This isomorphism holds if 
	$(L, C_L)$ are replaced by any $(K_{n_i}, X_{n_i})$. To linearize the addition of the $y_{n_i}$ 
	on $E$ to addition of differentials in $\Omega_{C_L}$ we consider the pullbacks $\omega_{n_i} = 
	y_{n_i}^*(\omega_E)$ of the invariant differential $\omega_E$ of $E$. Since $C_L \rightarrow 
	X_{n_i}$ is separable for all $1 \le i \le s$, the induced morphisms
	\begin{equation}
	\label{equ:diff-mod}
	y_{n_i}^*: \Omega_{X_{n_i}} \longrightarrow \Omega_{C_L}
	\end{equation}
	are injective. Now it follows from \eqref{equ:curve-hom} and \eqref{equ:diff-mod} that the 
	$\omega_{n_i}$ reside in different extensions inside $\Omega_{C_L}$ so they must be 
	$\Z$-linearly independent.
\end{proof}
We need Lemma 2 from \cite{voloch2007} which we state here for convenience.
\begin{lemma}
	\label{lem:coset}
	Let $m, q \ge 2$ be fixed integers and $\epsilon > 0$ a real number. For an integer $r \ge 2$ 
	with $(r, mq) = 1$, if $d$ is the order of $q \bmod r$, then, given $N < d$, there is a coset 
	$\Gamma$ of $\ang{q} \subset (\Z / r)^*$ with
	\[ \# \{ n \mid 1 \le n \le N, (n, m) = 1, n \bmod r \in \Gamma \} \gg Nd^{1 - \epsilon} / r - 
	r^\epsilon. \]
\end{lemma}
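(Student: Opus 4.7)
The plan is pigeonhole over the $\varphi(r)/d$ cosets of $\ang{q} \subset (\Z/r)^*$. First, I would count
\[
A := \#\{n : 1 \le n \le N,\ (n,mr) = 1\}
\]
by Möbius inversion, obtaining $A = N\varphi(mr)/(mr) + O(\tau(mr))$. Since $(r,mq)=1$ implies $(m,r)=1$, we have $\varphi(mr)=\varphi(m)\varphi(r)$; and $\tau(mr) \ll r^\epsilon$, because $m$ is fixed and the classical bound $\tau(r) = r^{o(1)}$ applies. Hence
\[
A \ge \frac{\varphi(m)}{m}\cdot\frac{N\varphi(r)}{r} - O(r^\epsilon).
\]

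Next, observe that requiring $n \bmod r \in \Gamma$ for any coset $\Gamma$ of $\ang{q}$ forces $(n,r)=1$, so all $A$ residues in fact lie in $(\Z/r)^*$. Distributing them among the $\varphi(r)/d$ cosets of $\ang{q}$, pigeonhole produces some coset containing at least
\[
\frac{Ad}{\varphi(r)} \ge \frac{\varphi(m)}{m}\cdot\frac{Nd}{r} - O\!\left(\frac{r^\epsilon d}{\varphi(r)}\right)
\]
elements. Since $d \le \varphi(r)$, the error reduces to $O(r^\epsilon)$. Absorbing the fixed positive constant $\varphi(m)/m$ into the implicit $\gg$-constant and using $d \ge d^{1-\epsilon}$, we obtain the stated $\gg N d^{1-\epsilon}/r - r^\epsilon$.

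There is essentially no obstacle here: the only items requiring care are the classical estimate $\tau(r) = r^{o(1)}$ to bound the Möbius error uniformly by $r^\epsilon$ (using that $m$ is a fixed constant), and the trivial inequality $d/\varphi(r) \le 1$ when absorbing the secondary error. In fact the same argument gives the slightly sharper $\gg Nd/r - r^\epsilon$; the $d^{-\epsilon}$ slack in the statement is cosmetic and simply buys room for any secondary losses one might want to introduce.
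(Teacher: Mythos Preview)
Your proof is correct. The paper does not itself prove this lemma but simply quotes it from \cite{voloch2007}; your M\"obius-inversion count of $\{n\le N:(n,mr)=1\}$ followed by pigeonhole over the $\varphi(r)/d$ cosets of $\ang{q}$ is the standard argument and is presumably what appears in the cited source. As you note, the hypothesis $N<d$ plays no role in the inequality itself, and your argument in fact yields the sharper main term $Nd/r$.
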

Following the notation of \cite{voloch2010}, for a function field $L/\F_q$ we define $\deg_Lz$ to 
be the degree of the divisor of zeros of $z$ in $L$. So $\deg_{K_n}y_n \ll n$.
\begin{theorem}
	\label{theo:order-general}
	Let $X \subset E \times \mathbb{G}_m$ be an absolutely irreducible curve over $\F_q$ with 
	non-constant projections to both factors. Given $\epsilon > 0$ there exists $\delta > 0$ such 
	that if $(P, b) \in X$ satisfies
	\begin{enumerate}[leftmargin = *, labelsep = *, align = left, itemsep = -0.1cm, font = 
	\normalfont, label = (\roman*)]
		\item $d = [\F_q(b) : \F_q]$ sufficiently large,
		\item $\ang{P}$ invariant under the action of $\pi$,
		\item the order $r$ of $b$ satisfies $r < d^{2 - \epsilon}$
	\end{enumerate}
	then $P$ has order at least $\exp(\delta (\log d)^2)$.
\end{theorem}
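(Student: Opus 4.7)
The plan is to adapt Voloch's argument from \cite{voloch2010}, which establishes the analogous bound for elements of high order in $\F_q^*$, to the elliptic-curve setting developed above. I would proceed by contradiction, assuming the order $M$ of $P$ satisfies $M<\exp(\delta(\log d)^2)$ for some small $\delta>0$ to be fixed, and construct a nontrivial $\Z$-linear relation among the $y_n$, contradicting Lemma \ref{lem:linear-ind}.

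First I would set up the arithmetic input. Since $b\in\F_{q^d}$ generates $\F_q(b)/\F_q$ and has multiplicative order $r$, the order of $q$ modulo $r$ is exactly $d$, placing us in the setting of Lemma \ref{lem:coset}. Applying that lemma with $m=Dp$ and the given $\epsilon$ produces a coset $\Gamma\subset(\Z/r)^*$ of $\ang q$ and a set $S$ of integers $n\in[1,N]$ with $(n,Dp)=1$ and $n\bmod r\in\Gamma$, of cardinality at least of order $Nd^{1-\epsilon}/r-r^\epsilon$. Using the hypothesis $r<d^{2-\epsilon}$, this count exceeds $Nd^{-1+\eta}$ for some $\eta=\eta(\epsilon)>0$, so choosing $N$ as a suitable polynomial in $d$ arranges $|S|\gg d(\log d)^2$, which is the size required by the pigeonhole below.

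Next I would extract the relation. For each $n\in S$ the coset condition $n\bmod r\in\Gamma$ picks out a compatible preimage $Q_n=(P,c_n)\in X_n$ of $(P,b)$ whose Galois orbit projects nicely onto that of $(P,b)$. Combined with the $\pi$-invariance of $\ang P$ -- which forces Frobenius to act on $\ang P$ as multiplication by some integer $k$ -- every evaluation of $y_n$ along the Frobenius orbit of $Q_n$ lies in $\ang P$, a set of size $M$. Consider the $3^{|S|}$ formal combinations $\sum_{n\in S}\epsilon_n y_n$ with $\epsilon_n\in\{-1,0,1\}$ and their $d$-tuples of evaluations at the Galois orbit of a chosen lift of $(P,b)$ to the compositum curve $C_L$; these tuples lie in a set of size at most $M^d$. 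Arranging $3^{|S|}>M^d$, pigeonhole produces two distinct combinations with the same tuple, and their difference is a nontrivial combination with coefficients in $\{-2,\dots,2\}$ vanishing at all chosen points of $C_L$.

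Passing to the differential picture of Lemma \ref{lem:linear-ind} then translates this into a differential $\omega=\sum\epsilon_n\omega_n\in\Omega_{C_L}$ with prescribed zeros. A degree bound for $\omega$, obtained from $\deg_{K_n}y_n\ll n\le N$ and Riemann--Hurwitz for $C_L\to X$, forces $\omega\equiv 0$ once the forced number of zeros exceeds its degree, contradicting the $\Z$-linear independence of the $\omega_n$ from Lemma \ref{lem:linear-ind}. The main obstacle is the quantitative balancing in this final step: the exponent $(\log d)^2$ in the bound on $M$ emerges from reconciling the pigeonhole condition $|S|\log 3>d\log M$ with the degree-versus-zeros inequality forcing $\omega\equiv 0$, and $\delta$ must be tuned against the implicit constants of Lemma \ref{lem:coset} and of the Riemann--Hurwitz estimate on $C_L\to X$.
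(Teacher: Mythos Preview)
Your proposal assembles the right ingredients---Lemma \ref{lem:coset}, the linear independence of Lemma \ref{lem:linear-ind}, and a degree-versus-zeros comparison---but the combinatorial architecture breaks down at two places.

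First, Lemma \ref{lem:coset} is stated (and is only valid in the needed form) for $N<d$; you cannot ``choose $N$ as a suitable polynomial in $d$''. Hence $|S|\le N<d$, and your pigeonhole inequality $3^{|S|}>M^d$ already fails for every $M\ge 3$, let alone for $M$ close to $\exp(\delta(\log d)^2)$. Second, even granting a large $S$, the compositum $L$ of \emph{all} the $K_n$, $n\in S$, has degree over $K(X)$ growing multiplicatively in $|S|$, so any degree bound for $\omega=\sum_{n\in S}\epsilon_n\,\omega_n$ on $C_L$ is exponential in $|S|$ and can never be beaten by the $d$ prescribed zeros coming from the Frobenius orbit of $c$. (Relatedly, a lift of $(P,b)$ to $C_L$ need not have Galois orbit of length $d$; its residue field is an extension of $\F_q(c)$ of degree up to $[L:K(X)]$.)

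The paper sidesteps both obstacles by a different pivot: it is a \emph{direct} count rather than a contradiction. One takes $N=[d^{1-\epsilon}]<d$, so Lemma \ref{lem:coset} yields an index set $J$ with $|J|\gg d^{2\epsilon/3}$. Then one only forms sums $P_I=\sum_{j\in I}\pi^j(P)$ over \emph{small} subsets $I\subset J$ with $|I|\le T=[\eta\log d]$. If two such subsets gave equal $P_I$, the resulting function $z=\sum_{j\in I}y_{n_j}-\sum_{j\in I'}y_{n_j}$ lives in a compositum of only $\le 2T$ fields, so $\deg_L z\ll T D^{2T}N$, which for small $\eta$ is $<d$; since $z$ vanishes at a place of $L$ above $c$ (of degree $\ge d$), this forces $z=0$, contradicting Lemma \ref{lem:linear-ind}. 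Hence all the $P_I$ are distinct and lie in $\ang{P}$, giving $|\ang{P}|\ge\binom{|J|}{T}\gg\exp(\delta(\log d)^2)$. The exponent $(\log d)^2$ thus comes from $T\cdot\log|J|\asymp(\log d)(\log d)$, not from balancing $|S|$ against $d\log M$ as in your scheme. The missing idea in your outline is precisely this restriction to subsets of logarithmic size, which keeps the compositum degree polynomial in $d$ while still producing super-polynomially many distinct sums.
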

\begin{proof}
	A slight modification of the proof of \cite[Theorem 1.1]{voloch2010} works here. Let $N = [d^{1 
	- \epsilon}]$. Then using the parameters in the conditions of the theorem we get a coset 
	$\Gamma = \gamma \ang{q}$ from Lemma \ref{lem:coset}. Let $c \in \overline{\F}_q$ be an $r$-th 
	root of unity such that $c^\gamma = b$. For an integer $n \le N$ with $(n , q) = 1$ and $n 
	\bmod r \in \Gamma$ we have, by construction, $n \equiv \gamma q^j \bmod r$ for some $j$. Let 
	$J$ be the set of $j$'s obtained for all such $n$. For simplicity, denote also by $\pi$ the 
	$\F_q$-Frobenius map on $E \times \mathbb{G}_m$. Then for $j \in J$ we have
	\[ \pi^j(P, b) = (\pi^j(P), \pi^j(b)) = (\pi^j(P), b^{q^j}) = (\pi^j(P), c^{n_j}) \]
	where $n_j$ corresponds to $j$, that is $n_j \le N$, $(n_j, q) = 1$, $n_j \bmod r \in 
	\Gamma$ and $n_j \equiv \gamma q^j \bmod r$. This means that there is a place of $K_{n_j}$ above 
	$c$ where $y_{n_j}$ takes the value $\pi^j(P)$. 
	
	For a subset $I \subset J$ define $P_I = \sum_{j \in I}\pi^j(P)$. Note that $P_I$ is in 
	$\ang{P}$ since $\ang{P}$ is invariant under $\pi$. Let $T = [\eta \log d]$ for some real 
	parameter $\eta > 0$. We show that for distinct $I \subset J$ with $\abs{I} \le T$ the points 
	$P_I$ are distinct. Assume $P_I = P_{I'}$ for $I \ne I'$. Define  
	\[ z = \sum_{j \in I}y_{n_j} - \sum_{j \in I'}y_{n_j} \]
	where the addition occurs on $E$. Let $L$ be the compositum of $\{ K_{n_j} \}_{j \in I \cup 
	I'}$. Then $z$ vanishes at a place of $L$ above $c$. But we have
	\[ \deg_L z \le \sum_{j \in I \cup I'} \deg_L y_j = \sum_{j \in I \cup I'} [L : K_{n_j}] 
	\deg_{K_{n_j}} y_{n_j} \ll TD^{2T}N \] 
	which can be made smaller than $d = [\F_q(c) : \F_q]$ for some small $\eta$ and all 
	sufficiently large $d$. This is not possible unless $z = 0$ and so the $ \{ y_{n_j} \}_{j \in I 
	\cup I'}$ are $\Z$-linearly dependent, which contradicts Lemma \ref{lem:linear-ind}. Therefore, 
	the number of distinct points $P_I$ is at least $\binom{\abs{J}}{T}$. Setting $\epsilon 
	\leftarrow \epsilon / 3$ in Lemma \ref{lem:coset} we get the same bound as in \cite{voloch2007}:
	\[ \abs{J} \gg d^{2 - \epsilon / 3} / r - r^{\epsilon / 3} \gg d^{2\epsilon / 3}, \]
	and $\binom{\abs{J}}{T} \ge (\abs{J} / T - 1)^T \gg \exp(\delta (\log d)^2)$ for some $\delta > 
	0$.
\end{proof}
The logarithmic term $\log d$ in the exponent in Theorem \ref{theo:order-general} is forced by the 
exponent $2T$ in the bound $TD^{2T}N$ obtained in the proof. For special cases of the irreducible 
curve $X$, e.g. an open subset of the graph of a morphism $f: E \rightarrow \mathbb{P}^1$, we can 
obtain much better bounds on the order of $P$.  
\begin{theorem}
	\label{theo:order-ell}
	Let $E/\F_q$ be an elliptic curve and let $f: E \rightarrow \mathbb{P}^1$ be the projection to 
	the first coordinate. Then with the assumptions of Theorem \ref{theo:order-general} and with 
	$X$ as an open subset of the graph of $f$, the point $P$ has order at least $\exp(d^\delta)$.
\end{theorem}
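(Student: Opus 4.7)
The plan is to mirror the structure of the proof of Theorem~\ref{theo:order-general}, but to replace the crude degree bound $TD^{2T}N$ by a sharper polynomial bound. The sharpening takes advantage of the fact that when $X$ is an open subset of the graph of $f : E \to \P^1$, the covers $X_n \to X$ are Kummer of a very simple type, so that the compositum of several $K_{n_j}$ has degree governed by $\lcm(n_j)$ rather than by $D^{2T}$.

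First, I would re-run the construction from Theorem~\ref{theo:order-general}: apply Lemma~\ref{lem:coset} with $\epsilon/3$ in place of $\epsilon$ to produce a coset $\Gamma$ and a set $J$ of indices with $|J| \gg d^{2\epsilon/3}$, each $j \in J$ giving an integer $n_j \le N = [d^{1-\epsilon}]$ with $n_j \equiv \gamma q^j \pmod{r}$. For $I \subset J$ of size at most $T$, set $P_I = \sum_{j \in I}\pi^j(P)$; the goal is to make $T$ polynomial in $d$ while keeping the $P_I$ distinct. If $P_I = P_{I'}$ for $I \ne I'$, then $z = \sum_{j \in I}y_{n_j} - \sum_{j \in I'}y_{n_j}$ vanishes at a place of the compositum $L$ of $\{K_{n_j}\}_{j \in I \cup I'}$ above $c$.

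Next I would exploit the graph hypothesis. Since the projection $X \to E$ is birational, $K(X) \cong K(E)$, and then $K_n = K(X_n) = K(E)(\sqrt[n]{x})$ is an explicit Kummer extension of $K(E)$. Therefore $L$ lies inside the single Kummer extension $K(E)(\sqrt[M]{x})$ with $M = \lcm(n_j : j \in I \cup I')$, giving $[L:K(E)] \le M$ and $[L:K_{n_j}] \le M/n_j$. Combining these with $\deg_{K_{n_j}}y_{n_j} \ll n_j$ upgrades the key estimate to
\[
\deg_L z \;\le\; \sum_{j \in I \cup I'}[L:K_{n_j}] \cdot \deg_{K_{n_j}}y_{n_j} \;\ll\; T \cdot M,
\]
replacing the exponential factor $D^{2T}$ in the generic bound by a linear factor in $M$. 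Forcing $z = 0$ (and thus contradicting Lemma~\ref{lem:linear-ind}) requires $TM \ll d$. Taking $T = [d^\delta]$ for small $\delta > 0$ and restricting $J$ to $n_j$'s lying in a suitably smooth progression (so that $\lcm$ of any $2T$ of them stays below $d^{1-\delta'}$), the $P_I$ become distinct as $I$ ranges over subsets of $J$ of size $\le T$, and the order of $P$ is at least
\[
\binom{|J|}{T} \;\ge\; \left(\frac{|J|}{T}\right)^T \;\gg\; \exp(d^\delta).
\]

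The main obstacle is the last step: controlling the growth of $\lcm(n_j)$ simultaneously with $|J|$ so that the two estimates balance to produce the $\exp(d^\delta)$ lower bound. This is an analytic-combinatorial refinement of Lemma~\ref{lem:coset}, sieving the residue class $\Gamma$ further to retain only $n_j$'s with bounded radical or restricted prime support. Once this refinement is in place, the rest of the argument is a direct specialization of the proof of Theorem~\ref{theo:order-general}.
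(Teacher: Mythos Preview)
Your Kummer-theoretic bound $\deg_L z \ll T\cdot M$ with $M=\lcm(n_j : j\in I\cup I')$ is correct, but it is not an improvement: for generic $n_j\le N$ the quantity $M$ can be as large as $N^{2T}$, so you have merely traded the factor $D^{2T}$ of the general theorem for an equally bad $N^{2T}$. The entire burden then falls on your ``analytic--combinatorial refinement of Lemma~\ref{lem:coset}'', which you leave open. This is a genuine gap, and it does not appear to be routine: restricting the $n_j$ to have small radical (so that $\lcm$ stays below $d^{1-\delta'}$) forces the $n_j$ to be $P$-smooth for essentially bounded $P$, and there are only polylogarithmically many such integers below $N$ --- far too few to keep $|J|\gg d^{2\epsilon/3}$ while also hitting the required coset of $\langle q\rangle$ in $(\Z/r\Z)^*$.

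The paper avoids this detour entirely. The point of the graph hypothesis is that one can write down the $y_n$ \emph{explicitly}: in $K_n$ one has $y_n=(x^n,y)$, where $x$ is the coordinate on the common left $\P^1$. Thus each $y_{n_j}$ has first coordinate a monomial of degree $n_j\le N$ in the single variable $x$, and adding $2T$ such points on $E$ produces a $z$ whose degree over the base is controlled additively by $\sum n_j \ll TN$, with only a single factor $D$ coming from the quadratic $y$-extension. This yields $\deg_L z \ll TDN$ directly, with no $\lcm$ and no sieving, and then one may take $T=[d^\eta]$ exactly as in \cite{voloch2010}. In short, the explicit form of $y_n$ is what converts the multiplicative compositum-degree bound into an additive one; your proposal misses this and tries to repair the multiplicative bound after the fact.
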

\begin{proof}
	Let $X$ be an open subset of the graph of a morphism $f: E \rightarrow \mathbb{P}^1$ defined by 
	projection to the first coordinate, that is $f(P) = x_P$ for any point $P$ on $E$. A point $(P, 
	x_P)$ is in $X_n$ if $(P, x_P^n)$ is in $X$. Therefore, it is implied by the commutative digram
	\begin{center}
		\begin{tikzpicture}[>={Computer Modern Rightarrow[length = 5, width = 4]}]
			\node[inner sep = 5pt] (KXn) {$K_n$};
			\node[right = 1.7cm of KXn, inner sep = 3pt] (KX) {$K(X)$};
			\node[below = of Xn] (KGm1) {$K(\P^1)$};
			\node[below = of KX] (KGm2) {$K(\P^1)$};
			\node[right = of KX] (KE) {$K(E)$};
			\draw[->] (KX) edge (KXn);
			\draw[->] (KGm1) edge (KXn);
			\draw[->] (KGm2) edge (KX);
			\draw[->] (KGm2) edge node[above, midway] {\scriptsize $\mu_n^*$} (KGm1);
			\draw[->] (KE) edge (KX);
		\end{tikzpicture}
	\end{center}
	that $y_n = (x^n, y)$. So, following the proof of Theorem \ref{theo:order-general}, a much 
	smaller bound $\deg_L z \ll TDN$ is obtained. This means we can choose a larger value of $T$, 
	say $T = [d^\eta]$ following the notation of \cite{voloch2010}. Now the calculation of Theorem 
	\ref{theo:order-general} implies that $P$ has order at least $\exp(d^\delta)$ for some suitable
	$\delta > 0$.
\end{proof}
One concludes from the above theorems and the ones in \cite{voloch2010} that given a point $P \in 
E$ and a function $f$ on $E$, under some conditions, one of $P, f(P)$ has large order in its 
respective group. Therefore, forcing $f(P)$ to be of small order in $\F_q^*$ yields $P$ of large 
order in $E$. For appropriate choices of $\epsilon$ in Theorem \ref{theo:order-ell}, one gets large 
enough $\delta$, and hence $P$ of large order, without having to choose $d$ very large. Experiments 
show, however, that the lower bound of the theorem is very far from optimal. 

From the proof of Theorem \ref{theo:order-general} we see that the point $P$ has order at least 
$\binom{\abs{J}}{T}$ where $\abs{J} \gg d^{2 - \epsilon / 3} / r - r^{\epsilon / 3}$ and $T = 
[d^\eta]$ for some $\eta > 0$. The bound $\exp(d^\delta)$ is just an approximation of the binomial 
coefficient. In practice, we could ignore this approximation and use the value of the binomial 
coefficient directly. Here, $\eta$ is a function of $\epsilon$ and can be calculated from the 
inequality $\deg_L(\sum_{j \in I}y_{n_j} - \sum_{j \in I'}y_{n_j}) \ll TDN $. From this we obtain 
the rough estimate $T \approx d^{2\epsilon} / 3$ which gives $\eta \approx 2\epsilon$. 

Let $r$ be a prime such that $p$ is a generator of $(\Z/r\Z)^*$. Then $q = p^2$ has order $d = (r - 
1) / 2$, and $d^{2 - \epsilon} > r$ for a wide range of values of $0 < \epsilon < 1$. Now the 
$r$-th cyclotomic polynomial $\Phi_r(T)$ splits into two irreducible factors of degree $d$ over 
$\F_q$. Let $g(T)$ be one of the factors so that $K = \F_q[T]/g(T)$ is a field, and let $t$ be the 
image of $T$ in $K$. A point $P \in E$ with first coordinate $t$ lies in $E(F)$ where $[F : K] \le 
2$. If $P$ satisfies \eqref{equ:ss-strong} then $\ang{P}$ is invariant under the action of $\pi$. 
Then, by the above, the order of $P$ is at least $\binom{\abs{J}}{T}$. We need to choose a suitable 
$r$ and optimize for $\epsilon$ so that
\begin{equation}
	\label{equ:binom}
	\binom{\abs{J}}{T} \gg \binom{d^{2 - \epsilon / 3} / r - r^{\epsilon / 3}}{d^{2\epsilon}} 
	\approx \binom{\lfloor r / 2 \rfloor^{1 - \epsilon / 3} - r^{\epsilon / 3}}{\lfloor r / 2 
	\rfloor^{2\epsilon}} \ge 2p + 2.
\end{equation}
It suffices to take $r \in O(\log p)$ and an appropriate $\epsilon$ to obtain the bound in 
\eqref{equ:binom}. Then $P$ will have order large enough to imply that $\psi_p = 1$. 

\begin{algorithm}[H]
	\caption{Testing supersingularity}
	\label{alg:ss-strong}
	\begin{algorithmic}[1]
		\Require An elliptic curve $E$ over $\F_q$
		\Ensure True if $E$ is supersingular, and false otherwise
		\If {$j(E) = 0$}
			\State If $p = 2 \bmod 3$ then return true, otherwise return false
		\EndIf
		\If {$j(E) = 1728$}
			\State If $p = 3 \bmod 4$ then return true, otherwise return false
		\EndIf
		\State\label{step:ss-eval}%
		Compute $\psi_p(a)$ for a random $a \in \F_q$.
		\If {$\psi(a) \ne \pm 1$}
			\State return false
		\EndIf
		\State\label{step:ss-find-par}%
		Find $r \in \tildO(\log p)$ such that $p$ generates $(\Z/r\Z)^*$ and such that 
		\eqref{equ:binom} holds
		\State\label{step:cyclo}%
		Obtain an irreducible factor $g(x)$ of the $r$-th cyclotomic polynomial $\Phi_r(x)$ over 
		$\F_q$
		\State\label{step:ss-psi}%
		Compute $x^{p^2}, f_p, f_{p - 1}, f_{p + 1} \bmod g(x)$ using Algorithm 
		\ref{alg:divpoly-comp}
		\If {\eqref{equ:ss-strong} holds}
			\State return true
		\Else
			\State return false
		\EndIf
	\end{algorithmic}
\end{algorithm}

To analyze Algorithm \ref{alg:ss-strong} we need the following theorem \cite{hooley1967artin, 
matthews1976generalisation, finch2003mathematical}.
\begin{theorem}
	\label{theo:Artin}
	Let $S(p, x)$ be the number of primes $r \le x$ such that $p \bmod r$ is a generator of 
	$(\Z/r\Z)^*$. Assuming the generalized Riemann hypothesis, we have 
	\[ S(p, x) \approx C(p)\frac{x}{\log(x)}\]
	where $C(p)$ is a constant depending on $p$.
\end{theorem}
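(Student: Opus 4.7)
The plan is to follow Hooley's conditional argument for Artin's primitive root conjecture, of which this is essentially a restatement. First I would note the elementary reformulation: a prime $r$ coprime to $p$ has $p$ as a generator of $(\Z/r\Z)^*$ if and only if $p^{(r-1)/q} \not\equiv 1 \pmod{r}$ for every prime $q \mid r-1$, equivalently if and only if $p$ is not a $q$-th power residue mod $r$ for any such $q$. Using Möbius inversion over squarefree integers $k$ all of whose prime factors divide $r-1$, I can write
\[
S(p,x) = \sum_{r \le x,\ r \nmid p} \sum_k \mu(k)\, \mathbf{1}\!\left[k \mid r-1 \text{ and } p \text{ is a } k\text{-th power mod } r\right].
\]

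Next I would translate the inner indicator into a Chebotarev condition via Kummer theory: the event ``$k \mid r-1$ and $p$ is a $k$-th power mod $r$'' is equivalent to $r$ splitting completely in the number field $L_k := \Q(\zeta_k,\, p^{1/k})$, whose degree over $\Q$ is $k\varphi(k)/\epsilon_k$ for an explicit small factor $\epsilon_k$ depending on multiplicative dependencies among powers of $p$. Assuming GRH for the Dedekind zeta functions of these $L_k$, the effective Chebotarev density theorem of Lagarias--Odlyzko yields
\[
\#\{r \le x : r \text{ splits completely in } L_k\} = \frac{\pi(x)}{[L_k:\Q]} + O\!\bigl(\sqrt{x}\,\log(x\cdot\mathrm{disc}(L_k))\bigr).
\]

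Then comes Hooley's key partition of the range of $k$ into three zones, and this is the main obstacle. For small $k$ up to roughly $\xi_1 := \tfrac{1}{6}\log x$, the GRH-Chebotarev estimate above is strong enough to yield the main term after summing against $\mu(k)$. For the intermediate range $\xi_1 < k \le \sqrt{x}/\log^2 x$, one must instead bound the contribution by elementary or Brun--Titchmarsh estimates for primes in arithmetic progressions, since GRH no longer provides uniform savings. For the large range $k > \sqrt{x}$, one uses the observation that the conditions force $r \equiv 1 \pmod k$, which directly restricts the count. Balancing these three ranges is delicate and is the heart of Hooley's original argument; it is exactly here that conditionality on GRH becomes essential, and where Matthews' generalization to arbitrary bases $p$ requires extra care in tracking $\epsilon_k$.

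Finally, assembling the main terms from the small-$k$ zone gives
\[
S(p,x) \sim \left(\sum_{k=1}^{\infty} \frac{\mu(k)}{[L_k:\Q]}\right) \frac{x}{\log x} = C(p)\,\frac{x}{\log x},
\]
where $C(p)$ is the Artin constant adjusted for $p$ by the factors $\epsilon_k$. Rather than reproducing the full bookkeeping, I would cite \cite{hooley1967artin} for the case of a rational prime base and \cite{matthews1976generalisation} for the arithmetic subtleties when $p$ lies in a nontrivial multiplicatively dependent family, and simply apply the asymptotic as a black box.
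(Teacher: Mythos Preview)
The paper does not prove this theorem at all; it is simply stated with citations to \cite{hooley1967artin, matthews1976generalisation, finch2003mathematical} and then used as a black box in the analysis of Algorithm~\ref{alg:ss-strong}. Your sketch is a faithful outline of Hooley's conditional argument, so your proposal is correct and in fact goes well beyond what the paper itself provides.
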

The constant $C(p)$ in Theorem \ref{theo:Artin} can be explicitly written as $C(p) = (1 + 1 / (p^2 
- p - 1))C_{\text{Artin}}$ where $C_{\text{Artin}} = 0.3739558136\dots$ is called Artin's 
constant. The theorem simply states that the density of primes for which a given prime is a 
primitive root is roughly $C(p)$. Theorem \ref{theo:main} follows from the following proposition.
\begin{proposition}
	\label{prop:ss-strong}
	Algorithm \ref{alg:ss-strong} is correct, and on input an ordinary curve, runs in an expected 
	$\tildO(\log p)$ operations in $\F_p$. On input a supersingular curve, assuming the generalized 
	Riemann hypothesis, the algorithm runs in an expected $\tildO(\log^2 p)$ operations in $\F_p$.
\end{proposition}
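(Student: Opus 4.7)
The plan is to establish correctness first and then bound the expected runtime in each of the two output branches separately.

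For correctness, on a supersingular input with $j(E) \notin \{0, 1728\}$ Lemma \ref{lem:ss_divpoly} ensures $\psi_p \equiv \pm 1$, so Step \ref{step:ss-eval} passes and identity \eqref{equ:ss-strong} holds everywhere on $E$; in particular it holds modulo $g(x)$, so Step \ref{step:ss-psi} returns true. On an ordinary input either Step \ref{step:ss-eval} rejects because $\psi_p(a) \neq \pm 1$, or else \eqref{equ:ss-strong} holds at a point $P$ whose first coordinate is a root of $g(x)$; in the latter case I would invoke the proposition opening Section \ref{sec:pit-h-order}, together with the quantitative choice of $r$ via \eqref{equ:binom}, to force $\tilde{\psi}_p(a) = 1$ at more than $(p-1)/2$ distinct values $a$, hence $\tilde{\psi}_p \equiv 1$, contradicting Lemma \ref{lem:ord_divpoly}.

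For the ordinary-case runtime, Step \ref{step:ss-eval} runs Algorithm \ref{alg:divpoly-comp} with modulus $f(x) = x - a$ of degree one, costing $\tildO(\log p)$ operations in $\F_p$. By Lemma \ref{lem:ord_divpoly} the polynomial $\tilde{\psi}_p$ has degree $(p-1)/2$, so Lemma \ref{lemma:Schwartz-Zippel} gives $\Pr[\psi_p(a) = \pm 1] = O(1/p)$ for uniformly random $a \in \F_q$. Hence the algorithm terminates at Step \ref{step:ss-eval} with probability $1 - O(1/p)$, and the $\tildO(\log^2 p)$ cost of the remaining steps contributes only an $O(1/p)$-weighted term to the expectation. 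The expected total is $\tildO(\log p)$.

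For the supersingular branch I would bound Steps \ref{step:ss-find-par}--\ref{step:ss-psi} in turn. Theorem \ref{theo:Artin} (and hence GRH) guarantees a positive density of primes $r$ of size $O(\log p)$ for which $p$ is a primitive root mod $r$; testing a single candidate requires only a small-modulus order computation of cost $\tildO(\log^2 p)$ bit operations, and an expected $O(1)$ candidates suffice to satisfy both the Artin condition and \eqref{equ:binom}. In Step \ref{step:cyclo}, equal-degree factorization of $\Phi_r(x)$, a polynomial of degree $r-1 = O(\log p)$, costs $\tildO(\log p)$ operations. Finally, with $\deg g = (r-1)/2 = O(\log p)$, repeated squaring yields $x^{p^2} \bmod g(x)$ in $O(\log p \cdot \MM(\deg g)) = \tildO(\log^2 p)$ operations in $\F_p$, and Algorithm \ref{alg:divpoly-comp} produces $f_p, f_{p\pm 1} \bmod g$ at cost $O(\MM(\deg g)\log p) = \tildO(\log^2 p)$. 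Summing yields the claimed $\tildO(\log^2 p)$ expected bound.

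The hard part will be the correctness direction for ordinary inputs that survive Step \ref{step:ss-eval}: one must connect the quantitative choice of $r$ in Step \ref{step:ss-find-par} through \eqref{equ:binom} to a concrete lower bound on the order of the putative witness $P$ exceeding $2p+2$, and only then run the binomial/degree argument in Section \ref{sec:pit-h-order} to conclude $\psi_p \equiv 1$. This is where Theorems \ref{theo:order-general} and \ref{theo:order-ell} are invoked, and where tightness of the estimate on $|J|$ relative to $r$ must be checked carefully so that the choice of $r \in \tildO(\log p)$ genuinely suffices.
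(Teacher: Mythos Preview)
Your proposal is correct and follows essentially the same approach as the paper: analyze each step of Algorithm~\ref{alg:ss-strong} for its cost, invoke Theorem~\ref{theo:Artin} (GRH) to guarantee the existence of $r$, and rely on the discussion preceding the algorithm (Theorems~\ref{theo:order-general}--\ref{theo:order-ell} and the bound~\eqref{equ:binom}) for correctness. One minor discrepancy: the paper quotes $\tildO(\log^2 p)$ for factoring $\Phi_r$ in Step~\ref{step:cyclo} (citing Shoup), not $\tildO(\log p)$ as you write, since equal-degree factorization over $\F_q$ costs $\tildO(n\log q)$ with $n,\log q \in O(\log p)$; this does not affect the overall bound.
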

\begin{proof}
	Step \ref{step:ss-eval} is done using $\tildO(\log p)$ operations in $\F_p$ using Algorithm 
	\ref{alg:divpoly-comp}. Since most curves $E/\F_q$ are ordinary and they almost always fail to 
	satisfy the condition $\psi(a) \ne \pm 1$, this will be the average-case complexity of the 
	algorithm. According to Theorem \ref{theo:Artin}, the integer $r$ in Step  
	\ref{step:ss-find-par} always exists and it can be computed in negligible time. 
	
	The cyclotomic polynomial $\Phi_r(x)$ in Step \ref{step:cyclo} can be factored using 
	$\tildO(\log^2p)$ operations in $\F_p$ \cite{shoup1994}. Step \ref{step:ss-psi} is performed at 
	the cost of $\tildO(\log^2p)$ operations in $\F_p$ using Algorithm \ref{alg:divpoly-comp}. 
	Therefore, the worst-case complexity of Algorithm \ref{alg:ss-strong} is $\tildO(\log^2 p)$ 
	operations in $\F_p$.
\end{proof}

\begin{remark}
	Our experiments have been better estimated by the following stronger lower bound conjectured by 
	Poonen\footnote{See \cite{voloch2007}.}.
	\begin{conjecture}[Poonen]
		\label{conj:poonen}
		Let $X/\F_q$ be a semiabelian variety and let $Y \subset X$ be a closed subvariety. Let $Z$ 
		be the union of all translates of positive-dimensional semiabelian varieties $X' / 
		\overline{\F}_q$ contained in $X$. Then there is a constant $c > 0$ such that for every 
		nonzero $x \in (X - Z)(\overline{\F}_q)$, $x$ has order at least $(\#\F_q(x))^c$.
	\end{conjecture}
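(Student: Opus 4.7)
Since Conjecture \ref{conj:poonen} is an open problem, any proposal is necessarily speculative; the plan is to extend the Frobenius-orbit technique of Theorems \ref{theo:order-general} and \ref{theo:order-ell} from the special case $X = E \times \mathbb{G}_m$ to a general semiabelian variety. Given $x \in (X - Z)(\overline{\F}_q)$ with $d = \#\F_q(x)$, I would invoke Lemma \ref{lem:coset} to produce a prime $r$ coprime to $p$ and a coset $\Gamma = \gamma\ang{q} \subset (\Z/r\Z)^*$ of size $\abs{J} \gg d^{2-\epsilon}/r - r^\epsilon$, supplying a large family of Frobenius iterates $\pi^j(x) \in \ang{x}$ labelled by $J$, and then lower-bound $\#\ang{x}$ by the number of distinct subset sums $\sum_{j \in I}\pi^j(x)$ over $I \subset J$ with $\abs{I} \le T \asymp d^\eta$.

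The key intermediate step is a semiabelian analogue of Lemma \ref{lem:linear-ind}. I would build an auxiliary curve $C \subset X \times \mathbb{G}_m$ with non-constant projections (using some regular function on $X$ to embed into $\mathbb{G}_m$), form the pullback tower $C_n \to C$ via $\mu_n$, and argue that the induced morphisms $y_n : C_n \to X$ are $\Z$-linearly independent by pulling back an invariant differential $\omega \in \Omega_X$: separable covers induce injections on differentials, so the $\omega_{n_i} = y_{n_i}^*\omega$ sit in distinct subspaces of $\Omega_{C_L}$ for any compositum $L$. Balancing $\epsilon$ and $\eta$ in $\binom{\abs{J}}{T}$ would then yield $\#\ang{x} \ge d^c$ for an explicit $c > 0$.

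The main obstacle — and the reason this remains a conjecture — is that the degree bound $\deg_L z \ll TD^{2T}N$ driving Theorem \ref{theo:order-general} exploits the fact that $X$ is a curve, so $\deg_L$ is a well-behaved scalar invariant. For higher-dimensional semiabelian $X$ one must replace this by heights or intersection numbers on $X$, and the hypothesis $x \notin Z$ must be parlayed into a converse statement: any short relation $\sum a_j \pi^j(x) = 0$ must force $x$ into a translate of a positive-dimensional semiabelian subvariety. This Bogomolov-type converse over $\overline{\F}_q$ is known only in restricted cases (abelian $X$ via work of Raynaud, Hrushovski, and others, or $X = \mathbb{G}_m^n$), and a uniform semiabelian statement seems to require arithmetic-geometric input well beyond the function-field-differential method of Section \ref{sec:pit-h-order}. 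Absent such input, the best one can realistically hope to prove by the methods at hand is a weaker lower bound of the form $\exp(\delta(\log d)^2)$, matching Theorem \ref{theo:order-general}, rather than the polynomial $(\#\F_q(x))^c$ predicted by Poonen.
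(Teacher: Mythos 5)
There is nothing to compare against here: the statement you were asked about is Conjecture \ref{conj:poonen}, which the paper does not prove and does not claim to prove --- it is quoted as an open conjecture of Poonen and used only conditionally, in a remark, to explain why Algorithm \ref{alg:ss-strong} would run in $\tildO(\log p)$ operations if the conjecture were true. Your proposal is likewise, by your own admission, not a proof; so the honest verdict is that there is a genuine (and expected) gap, and it is worth naming where your sketch breaks down rather than pretending the program could be completed.

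Concretely: (1) the Frobenius-orbit counting of Theorems \ref{theo:order-general} and \ref{theo:order-ell} needs the auxiliary data that the conjecture does not supply --- a companion coordinate $b$ of \emph{small} multiplicative order $r < d^{2-\epsilon}$ (condition (iii)) to feed into Lemma \ref{lem:coset}, and the hypothesis that $\ang{x}$ is stable under Frobenius (condition (ii)), without which the subset sums $\sum_{j\in I}\pi^j(x)$ need not lie in $\ang{x}$ and give no lower bound on its order at all. An arbitrary $x \in (X-Z)(\overline{\F}_q)$ comes with neither. (2) Your device for manufacturing the missing $\mathbb{G}_m$-coordinate --- ``some regular function on $X$'' --- does not exist in general: if the abelian part of $X$ is nontrivial (in particular if $X$ is an abelian variety), every global regular function, and indeed every morphism $X \to \mathbb{G}_m$, is constant, so the auxiliary curve $C \subset X \times \mathbb{G}_m$ with two non-constant projections cannot be built this way. (3) Even where the machinery applies, it is quantitatively hopeless for the conjecture: the differential/degree argument caps $T$ at roughly $d^{\eta}$ (or $\log d$ in the general-curve case), so the count $\binom{\abs{J}}{T}$ yields $\exp(d^{\delta})$ or $\exp(\delta(\log d)^2)$, whereas the conjectured bound $(\#\F_q(x))^c \approx q^{cd}$ is exponential in $d$; no tuning of $\epsilon$ and $\eta$ within this framework closes that gap, which is exactly why the paper cites Chang et al.\ for partial evidence rather than attempting a proof. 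Your final paragraph already concedes this, and that concession is the correct conclusion; the earlier paragraphs should not be read as a viable route to the statement.
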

	Chang et al.\cite{chang2014} have obtained strong results indirectly confirming the above 
	conjecture for general varieties. In our context, this implies that if $P \in E(\F_q)$ does 
	not lie in any subfield, and $f$ is a non-constant function on $E$, then either $P$ or $f(P)$ 
	has order at least $(\#\F_q(P))^c$. 
	
	The point $P \in E(K)$ used in Algorithm \ref{alg:ss-strong} also satisfies the hypothesis of 
	Conjecture \ref{conj:poonen}. If the conjecture holds, then $P$ has order $\ge q^{rc} = 
	p^{2rc}$. This means we only need to take $r \approx 1 / 2c$. In this case, Algorithm 
	\ref{alg:ss-strong} always runs in $\tildO(\log p)$ operations in $\F_p$. Although, according 
	to our experiments, values of $r$ obtained this way are small, in theory we do not know of any 
	explicit bounds on $c$. This does not allow us to use Conjecture \ref{conj:poonen} for testing 
	the supersingularity of $E$.
\end{remark}

%/////////////////////////////////////////////////////

\section{Experiments}
\label{sec:exper}

We have implemented Algorithm \ref{alg:ss-strong} of this paper and Algorithm 2 of 
\cite{sutherland2012} both in C++. The arithmetic in $\F_{p^2}[x]$ is done using the NTL library 
\cite{shoup2001ntl}. The timings are obtained on a single core of an AMD FX(tm)-8120 at 1.4GHz on a 
Linux machine. Table \ref{table:exper} compares the runtimes for different sizes of the base field 
$\F_p$. The first column is the size of a randomly selected prime $p$ in bits. 

For each prime, we have generated 10 random ordinary and 10 random supersingular curves using Sage 
\cite{stein2008sage}. Since most of the elliptic curves $E/\F_q$ are ordinary, generating random 
ordinary curves amounts to simply choosing random coefficients $a, b$ for the Weierstrass equation. 
Generating supersingular curves can be efficiently done using the Complex Multiplication method in 
\cite{broker2009}. As pointed out in \cite{sutherland2012}, one can start from a curve constructed 
by the method of \cite{broker2009} and take a random walk in the $2$-isogeny graph to get a random 
supersingular curve.

\begin{table}
	\centering
	\small
	\begin{tabular}{cccccc}
		& & \multicolumn{2}{c}{\bfseries Ordinary} & \multicolumn{2}{c}{\bfseries Supersingular} \\
		\cmidrule[0.7pt](r){3-4} \cmidrule[0.7pt](l){5-6} 
		$\#\F_p$ (bits) & Sage pt-cnt & Alg \ref{alg:ss-strong} & IsoGr & Alg 
		\ref{alg:ss-strong} & IsoGr \\
		\midrule[0.7pt]
		33 & 0.035000 & 0.004 & 0.003 & 0.118 & 0.180 \\
		65 & 0.396000 &  0.007 & 0.006 & 0.566 & 0.595 \\
		129 & 5.185000 &  0.007 & 0.006 & 1.074 & 2.783 \\
		257 & 141.3410 &  0.019 & 0.012 & 5.609 & 12.59 \\
		385 & 1839.875 &  0.030 & 0.024 & 19.87 & 32.47 \\
		513 & 3820.591 &  0.049 & 0.039 & 35.87 & 67.66 \\
		641 & 32393.92 &  0.074 & 0.055 & 102.1 & 140.2 \\
		769 & 49644.34 &  0.135 & 0.084 & 157.2 & 256.4 \\
		897 & 96446.71 &  0.162 & 0.110 & 224.2 & 384.8 \\
		1025 & 169138.5 &  0.219 & 0.134 & 322.1 & 556.9 \\		
		\midrule[0.7pt]
	\end{tabular}
	\caption{Experiments (times are in seconds)}
	\label{table:exper}
\end{table}

The average times for each set of curves are listed in columns ``Ordinary" and ``Supersingular". 
Columns ``Alg. \ref{alg:ss-strong}" and ``IsoGr" refer to Algorithm \ref{alg:ss-strong} of Section 
\ref{sec:pit-h-order} and the one in \cite{sutherland2012}, respectively. As the timings suggest, 
complexities of the two algorithms differ only by a constant factor, which confirms the theory. 
Also detecting ordinary curves is substantially faster on average than detecting supersingular 
curves, which again confirms the complexities claimed in Proposition \ref{prop:ss-strong}. 

The second column shows timings for the point counting algorithm in Sage 7.5.1. The 
supersingularity test in Sage is done using a call to the point counting subroutine. Since Sage 
performs naive point counting over the extension $\F_q$, we have used the more efficient underlying 
subroutine \verb|_pari_().ellsea()| from PARI \cite{Pari}. 

\paragraph{Acknowledgment}
The author would like to thank Felipe Voloch for his valuable feedback on Section 
\ref{sec:pit-h-order}, and Luca De Feo for helpful comments. This work was partially supported by 
NSERC, CryptoWorks21, and Public Works and Government Services Canada.

%/////////////////////////////////////////////////////

\bibliographystyle{plain}
\bibliography{references}

\begin{thebibliography}{10}

\bibitem{broker2009}
Reinier Br{\"o}ker.
\newblock Constructing supersingular elliptic curves.
\newblock {\em J. Comb. Number Theory}, 1(3):269--273, 2009.

\bibitem{chang2014}
Mei-Chu Chang, Bryce Kerr, Igor~E Shparlinski, and Umberto Zannier.
\newblock Elements of large order on varieties over prime finite fields.
\newblock {\em Journal de Th{\'e}orie des Nombres de Bordeaux}, 26(3):579--593,
  2014.

\bibitem{Charles2009}
Denis~X Charles, Kristin~E Lauter, and Eyal~Z Goren.
\newblock Cryptographic hash functions from expander graphs.
\newblock {\em Journal of Cryptology}, 22(1):93--113, 2009.

\bibitem{finch2003mathematical}
Steven~R Finch.
\newblock {\em Mathematical constants}, volume~93.
\newblock Cambridge university press, 2003.

\bibitem{hooley1967artin}
Christopher Hooley.
\newblock On artin’s conjecture.
\newblock {\em J. reine angew. Math}, 225(209-220):248, 1967.

\bibitem{Husemoeller1987}
D.~Husem{\"o}ller.
\newblock {\em Elliptic curves}, volume 111 of {\em Graduate texts in
  Mathematics}.
\newblock Springer, 1987.

\bibitem{Jao2011}
David Jao and Luca De~Feo.
\newblock Towards quantum-resistant cryptosystems from supersingular elliptic
  curve isogenies.
\newblock In {\em International Workshop on Post-Quantum Cryptography}, pages
  19--34. Springer, 2011.

\bibitem{Jao2014}
David Jao and Vladimir Soukharev.
\newblock Isogeny-based quantum-resistant undeniable signatures.
\newblock In {\em International Workshop on Post-Quantum Cryptography}, pages
  160--179. Springer, 2014.

\bibitem{kohel1996}
David Kohel.
\newblock {\em Endomorphism rings of elliptic curves over finite fields}.
\newblock PhD thesis, University of California at Berkeley, 1996.

\bibitem{matthews1976generalisation}
Keith~R Matthews.
\newblock A generalisation of artin's conjecture for primitive roots.
\newblock {\em Acta arithmetica}, 29:113--146, 1976.

\bibitem{saxena2009}
Nitin Saxena.
\newblock Progress on polynomial identity testing.
\newblock {\em Bulletin of the EATCS}, 99:49--79, 2009.

\bibitem{schoof85}
Ren\'{e} Schoof.
\newblock Elliptic curves over finite fields and the computation of square
  roots mod \(p\).
\newblock {\em Math. Comp.}, 44(170):483--494, 1985.

\bibitem{schwartz1980}
Jacob~T Schwartz.
\newblock Fast probabilistic algorithms for verification of polynomial
  identities.
\newblock {\em Journal of the ACM (JACM)}, 27(4):701--717, 1980.

\bibitem{shoup1994}
Victor Shoup.
\newblock Fast construction of irreducible polynomials over finite fields.
\newblock {\em Journal of Symbolic Computation}, 17(5):371--391, 1994.

\bibitem{shoup2001ntl}
Victor Shoup et~al.
\newblock {NTL}: A library for doing number theory, 2016.

\bibitem{silverman2009arithmetic}
Joseph~H Silverman.
\newblock {\em The arithmetic of elliptic curves}, volume 106.
\newblock Springer Science \& Business Media, 2009.

\bibitem{stein2008sage}
William Stein et~al.
\newblock Sage: Open source mathematical software.
\newblock {\em 7 December 2009}, 2016.

\bibitem{sutherland2012}
Andrew~V Sutherland.
\newblock Identifying supersingular elliptic curves.
\newblock {\em LMS Journal of Computation and Mathematics}, 15:317--325, 2012.

\bibitem{Pari}
{The PARI Group}, Bordeaux.
\newblock {\em {PARI/GP, version {\tt 2.8.0}}}, 2016.

\bibitem{voloch2007}
Jos{\'e}~Felipe Voloch.
\newblock On the order of points on curves over finite fields.
\newblock {\em Integers: Electronic Journal Of Combinatorial Number Theory},
  7(A49):1, 2007.

\bibitem{voloch2010}
Jos{\'e}~Felipe Voloch.
\newblock Elements of high order on finite fields from elliptic curves.
\newblock {\em Bulletin of the Australian Mathematical Society},
  81(03):425--429, 2010.

\bibitem{vzGG}
Joachim von~zur Gathen and Jurgen Gerhard.
\newblock {\em Modern Computer Algebra}.
\newblock Cambridge University Press, New York, NY, USA, 1999.

\bibitem{washington2008}
Lawrence~C Washington.
\newblock {\em Elliptic curves: number theory and cryptography}.
\newblock CRC press, 2008.

\bibitem{zippel1979}
Richard Zippel.
\newblock {\em Probabilistic algorithms for sparse polynomials}.
\newblock Springer, 1979.

\end{thebibliography}

\end{document}